\documentclass[10pt]{amsart}
\usepackage{float}
\usepackage[dvipdfmx]{graphicx,color}
\usepackage{color,colordvi}
\usepackage{url}
\usepackage{enumerate}
\newtheorem{theorem}{Theorem}[section]
\newtheorem{lemma}[theorem]{Lemma}

\theoremstyle{definition}

\theoremstyle{remark}

\newcommand{\HOX}[1]{\marginpar{\footnotesize #1}}

\newcommand{\pe}[1]{{\color{magenta}{#1}}}

\numberwithin{equation}{section}

\title[Quantum graph associated with square and hexagonal lattices
]
{Inverse problems for  quantum graph associated with square and hexagonal lattices}

\author{\hskip 3mm K. Ando, E. Bl{\aa}sten, P. Exner, H. Isozaki,\\
E. Korotyaev, M. Lassas,  J. Lu and H. Morioka}
\date{}

\AtEndDocument{\bigskip{\footnotesize%
  \textsc{Kazunori Ando: Department of Electrical and Electronic Engineering and Computer Science, Ehime University, Matsuyama, 790-8577, Japan} \par
  \textit{Email address}: \texttt{ando@cs.ehime-u.ac.jp} \par

  \addvspace{\medskipamount}

  \textsc{Emilia Bl{\aa}sten: Computational Engineering, School of Engineering Science, LUT University, Lahti campus, 15210 Lahti, Finland} \par
  \textit{Email address}: \texttt{emilia.blasten@iki.fi} \par

  \addvspace{\medskipamount}

   \textsc{Evgeny Korotyaev: Department of  Math. Analysis,
Saint-Petersburg State University,  Universitetskaya nab. 7/9, St. Petersburg, 199034, Russia,
National Research University Higher School of Economics, St. Petersburg, Russia} \par
  \textit{Email address}: \texttt{e.korotyaev@spbu.ru} \par

  \addvspace{\medskipamount}
    \textsc{Pavel Exner: Doppler Institute for Mathematical Physics and Applied Mathematics, Czech Technical University in Prague, B\v{r}ehov\'a 7, 115 19 Prague, Czechia, and Nuclear Physics Institute, Czech Academy of Science, 250 68 \v{R}e\v{z}, Czechia} \par
    \textit{Email address}: \texttt{exner@ujf.cas.cz} \par

  \addvspace{\medskipamount}

  \textsc{Hiroshi Isozaki: Graduate School of Pure and Applied Sciences, Professor Emeritus, University of Tsukuba, Tsukuba, 305-8571, Japan} \par
  \textit{Email address}: \texttt{isozakih@math.tsukuba.ac.jp} \par

  \addvspace{\medskipamount}

  \textsc{Matti Lassas: Department of Mathematics and Statistics, University of Helsinki, FI-00014 Helsinki, Finland} \par
  \textit{Email address}: \texttt{matti.lassas@helsinki.fi} \par

  \addvspace{\medskipamount}
  \textsc{Jinpeng Lu: Department of Mathematics and Statistics, University of Helsinki, FI-00014 Helsinki, Finland} \par
  \textit{Email address}: \texttt{jinpeng.lu@helsinki.fi} \par

  \addvspace{\medskipamount}
  \textsc{Hisashi Morioka: Department of Electrical and Electronic Engineering and Computer Science, Ehime University, Matsuyama, 790-8577, Japan} \par
  \textit{Email address}: \texttt{morioka@cs.ehime-u.ac.jp} \par

}}
\begin{document}
\baselineskip 14pt
\maketitle
\begin{abstract}
We solve  inverse  problems from the D-N map for the quantum graph on a finite domain in a square lattice and that on a hexagonal lattice, as well as inverse scattering problems from the S-matrix for a  locally perturbed square lattice and  a hexagonal lattice.
\end{abstract}

\section{Introduction}
\subsection{Gel'fand problem}
\label{Gel'fand problem}
In the International Congress of Mathematics at Amsterdam in 1954, I.M. Gel'fand raised the following problem (an extended form): Let $(M,g)$ be a compact Riemannian manifold with boundary. Let $\lambda_0 < \lambda_1 \leq \cdots$ be the eigenvalues and $\varphi_0(x), \varphi_1(x), \dots$  the associated orthnormal eigenvectors for the operator $ H = - c(x)^2\Delta_g + V(x)$ on $M$ with the Dirichlet boundary condition on $\partial M$.
Then, from the knowledge of the boundary spectral data (BSD)  $\big\{(\lambda_n,\frac{\partial}{\partial\nu}\varphi_n\big|_{\partial M})\, ; \, n = 0, 1, 2, \dots\big\}$, where $\nu$ is the unit normal at the boundary, can one determine the manifold $M$ and the operator $H$? \\

\indent
For the case of Riemannian manifold, this problem was solved by Belishev-Kurylev \cite{BelKur92} using the boundary control method developped  by Belishev \cite{Bel87}.  The solution is unique up to a diffeomorphsim leaving $\partial M$ invariant, and this is the only obstruction for this problem.

\subsection{Quantum graph Hamiltonian and discrete operator}
\label{SubsecgraphHamiltonianedgeHamiltonian}

\indent
We are interested in the analogue of Gel'fand's problem on graphs. There are two models in which one can do that. The first one concerns graphs $\Gamma = \{\mathcal V,\mathcal E\}$ understood as a vertex set $\mathcal V$ and an edge set $\mathcal E$ determined by an adjacency matrix; we are then interested in the {\it discrete operator} defined on the vertex set,
\begin{equation}
\hat H \hat u(x) = \frac{1}{\mu_x}\sum_{y \sim x} g_{xy}(\hat u(y) - \hat u(x)) +  q(x)\hat u(x), \quad x \in \mathcal V,
\label{DiscretegraphLaplacian}
\end{equation}
where $y \sim x$ means that $x$ and $y$ are the endpoints of a same edge $e = e_{xy} \in \mathcal E$, $\mu : \mathcal V \to {\mathbb R}_+ = (0,\infty)$ is a weight on $\mathcal V$, $g : \mathcal E \to {\mathbb R}_+$ is a weight on $\mathcal E$ and $q : \mathcal V \to {\mathbb R}$ is a scalar potential. The other one concerns the so-called \emph{quantum graphs} \cite{BerkolaikoKuchment2013} in which the edges are identified with line segments and the Hamiltonian is a collection of one-dimensional Schr{\"o}dinger operators defined on them,
\begin{equation}
\widehat H_{\mathcal E} = \Big\{h_e =  - \frac{d^2}{dz^2} + V_e(z),  \ z \in [0,\ell_e] \Big\}_{e \in \mathcal E},
\end{equation}
with a real-valued potential $V_e(z)$. To make such an operator self-adjoint, one has to match the function properly at the vertices; we choose the simplest possibility assumig the $\delta$-coupling condition, see (\ref{Deltacoupling1}) below. \\
 \indent
It was proved in \cite{BILL2021}  that in the case of the discrete graph,  the graph structure and the coefficients of (\ref{DiscretegraphLaplacian}) are determined by BSD  under the Two-Point Condition introduced there. We emphasize here that the inverse problem for the discrete graph cannot be solved without such conditions, as documented by a counter example given in \cite{BILL2021}. Recall that the knowledge of BSD is equivalemt to the knowledge of the associated D-N map for all energies. The indicated result has various applications, in particular, those concerning the following three  issues:
\begin{enumerate}
\item Inverse boundary value problems for random walks.
\item
 Inverse scattering problems for locally perturbed discrete periodic graphs.
\item  Inverse boundary value problems and inverse scattering problems for quantum (metric) graph.
\end{enumerate}
The problem (1) has been discussed in \cite{BILL2}, where it was shown that the graph structure and the transition matrix of the random walk can be uniquely recovered from the distribution of the first passing time on the boundary, or from the observation on the boundary of one realization of the random walk.  \\
\indent
The problem (2) is  considered in \cite{BEILL}.
For a locally perturbed discrete periodic graph, the structure of the perturbed subgraph, along with the edge weight $g$ and the potential $q$ can be recovered from the scattering matrix at all energies provided that the Two-Points Condition is preserved under the perturbations.
For a  fixed energy~$\lambda$, the S-matrix $S(\lambda)$ of the whole system and the D-N map $\Lambda(\lambda)$ determine each other, and thus the problem is reduced to that of the bounded domain, to which we can apply the result of \cite{BILL2021}. In particular, if two locally perturbed periodic lattices have the same S-matrix for all energies, we can conclude:
 (i) If $\mu = \mu'$, then $g = g'$ and $q = q'$. (ii) If $q = q'$, then $\mu = \mu'$ and $g = g'$. (iii) In particular, if $\mu(v) = {\rm deg}\,(v)$, $\mu'(v') = {\rm deg}\,(v')$, then $g = g'$ and $q=q'$.\\
\indent
The well known duality between the discrete and metric graphs \cite{Exner97} allows to determine the spectrum of an \emph{equilateral} `continuous' graph from that of its discrete counterpart and \emph{vice versa}, and therefore the discrete Gel'fand problem is expected to play a role in inverse problems for quantum graphs. In \cite{BEILL}, we have considered such equilateral graphs, i.e. those
in which all the edges have the same length and the one-dimensional Hamiltonian has the same potential on them.
Letting $C_v$ be the $\delta$-coupling constant (see (\ref{Deltacoupling1})  below), and $d_v$ the degree of $v \in \mathcal V$, it was assumed that $C_v/d_v$ is a constant independent of $v \in \mathcal V$. We then proved that if two such quantum graphs ${\mathbb G}_{\Gamma}$ and ${\mathbb G}_{\Gamma'}$ have the same D-N map (or the same S-matrix) for all energies, then there exists a bijection ${\mathbb G}_{\Gamma} \to {\mathbb G}_{\Gamma'}$ preserving the edge relation. Moreover,  we have $d_v = d_{v'}$ and $C_v = C_{v'}$ $\forall v \in \mathcal V$.
Therefore, the S-matrix of an equilateral quantum graph determines its graph structure. \\
\indent
In this paper, we continue the investigation of \cite{BEILL},  and consider the problem of determining local perturbation of $C_v$ and $V_e(z)$ for lattices such that $\ell_e = 1$ holds for all $e \in \mathcal E$ and the degree $d_v$ is the same for all $v \in \mathcal V$.
Our method is different from that of \cite{BEILL} relying strongly on the explicit form of the lattice. Therefore, although we are convinced that the result will be true for a larger class of lattices, we restrict ourselves to the proof  for  the square and hexagonal situations.

\subsection{Edge and vertex  Schr\"odinger operators in the quantum graph}

Let us  recall basic facts about quantum graphs. Let  $\Gamma = \{\mathcal V, \mathcal E\}$ be a quantum (or metric) graph with a vertex set $\mathcal V$ and edge set $\mathcal E$. We assume that  each edge $e$ has unit length and identify it with the interval $[0,1]$. Consider a quantum-graph  Schr\"odinger operator, called the {\it  edge Schr{\"o}dinger operator} in this paper,
\begin{equation}
\widehat H_{\mathcal E} = \left\{h_e =  - \frac{d^2}{dz^2} + V_e(z),  \ z \in [0,1] \right\}_{e \in \mathcal E},
\end{equation}
with a real-valued potential satisfying the symmetry condition
\begin{equation}
V_e(z) \in L^2((0,1)), \quad V_e(z) = V_e(1 -z), \quad \forall e \in \mathcal E.
\end{equation}
The generalized Kirchhoff condition, or the {\it $\delta$-coupling condition}, is imposed\footnote{Here,  $v \in e$ means that $v$ is an end point of an edge $e$ and the derivative is conventionally taken in the outward direction.}:
\begin{equation}
\sum_{v \in e}\widehat u'_e(v) = C_v\widehat u_e(v), \quad \forall v \in \mathcal V^{o} := \mathcal V\setminus \partial{\mathcal V},
\label{Deltacoupling1}
\end{equation}
$C_v$ being a real constant. Here, $\partial\mathcal V$ is the boundary of $\mathcal V$, which will be chosen suitably later. Let $\phi_e(z,\lambda)$ be the solution of
\begin{equation}
\left\{
\begin{split}
& \Big(- \frac{d^2}{dz^2} + V_e(z)\Big)\phi_e(z,\lambda) = \lambda \phi_e(z,\lambda), \quad
z \in [0,1], \\
& \phi_e(0,\lambda) = 0, \quad \phi'_e(0,\lambda) = 1, \quad  ^{\prime}= \frac{d}{dz}.
\end{split}
\right.
\end{equation}
We put
\begin{equation}
\phi_{e0}(z,\lambda) = \phi_e(z,\lambda), \quad
\phi_{e1}(z,\lambda) = \phi_{e}(1 - z,\lambda).
\label{Definephie0}
\end{equation}
Each edge $e$ is parametrized as $e(z)$, $0 \leq z \leq 1$. Letting $e(0) = v$, $e(1) = w$, any solution $u = \{u_e(z,\lambda)\}_{e \in \mathcal E}$ of the equation $(h_e - \lambda)\widehat u = 0$ can be written as
\begin{equation}
\widehat u_e(z,\lambda) = \widehat u(v,\lambda)\frac{\phi_{e0}(z,\lambda)}{\phi_{e0}(1,\lambda)} +  \widehat u(w,\lambda)\frac{\phi_{e1}(z,\lambda)}{\phi_{e1}(1,\lambda)}.
\end{equation}
The edge Schr{\"o}dinger operator $\widehat H_{\mathcal E}$ is related to the {\it vertex  Schr{\"o}dinger operator} in the following way: we define the operators $\widehat\Delta_{\mathcal V,\lambda}$ and $\widehat Q_{\mathcal V,\lambda}$ on $\mathcal V$ by
\begin{equation}
\left(\widehat\Delta_{\mathcal V,\lambda}\widehat u\right)(v) = \frac{1}{d_v}\sum_{w \sim v}\frac{1}{\phi_{e0}(w,\lambda)}\widehat u(w,\lambda),
\label{VertexSchrodingerop}
\end{equation}
\begin{equation}
\widehat Q_{\mathcal V,\lambda} = \frac{1}{d_v}\sum_{v \in e}
\frac{\phi'_{e0}(1,\lambda)}{\phi_{e0}(1,\lambda)} + \frac{C_v}{d_v},
\label{QVlambdaformula}
\end{equation}
where $d_v$ is the degree of $v \in \mathcal V$. Then, the $\delta$-coupling condition (\ref{Deltacoupling1}) is rewritten in the form of vertex Schr{\"o}dinger equation
\begin{equation}
\left(- \widehat\Delta_{\mathcal V,\lambda} + \widehat Q_{\mathcal V,\lambda}\right)\widehat u(v) = 0, \quad \forall v \in \mathcal V^{0}.
\end{equation}

In our previous work \cite{BEILL}, we studied spectral properties of quantum graphs on a class of  locally (i.e. on a bounded part) perturbed periodic lattices including the square and hexagonal ones which are the object of interest in this paper. We have found, in particular,
that the N-D map (and the D-N map as well) for the vertex Schr\"odinger operator on the interior domain and the S-matrix  for the edge Schr\"odinger operator on the whole system determine each other, cf.~Corollary 6.16 in \cite{BEILL}. Therefore, one can  reduce the inverse boundary value problem for these lattices to that on a domain of the shape to be described below, namely the rectangular domain for the square lattice and the hexagonal parallelogram for the hexagonal lattice.

\subsection{Inverse boundary value problem for square and hexagonal lattices}

Let $\Gamma_0 = \{\mathcal V_0, \mathcal E_0\}$ be the square or the hexagonal lattice in ${\mathbb R}^2$ with vertex set $\mathcal V_0$ and edge set $\mathcal E_0$. Assume that we are given a bounded domain $\Omega$ in $\mathbb R^2$ and a subgraph $\Gamma = \{\mathcal V, \mathcal  E\}$, where $\mathcal E = \mathcal E_0 \cap \Omega$, $\mathcal V = \mathcal V_0 \cap \Omega$. We define $\partial \mathcal E$ to be the set of $v \in \mathcal V$ such that $v$ is an end point of some edge in $\mathcal E$, and ${\rm deg}_{\Gamma}(v) = 1$, where ${\rm deg}_{\Gamma}(v)$ is the degree of $v$ in the graph $\Gamma$. We further assume that $\Gamma$ has the following properties\footnote{These assumptions were used in \cite{BEILL}; we state them again to make the paper self-contained.}:

\medskip
\noindent
{\it
(i)  $\ \partial \mathcal E = \partial \mathcal V.$ 

\smallskip
\noindent
(ii) $d_v = 1\ { for} \ v \in \partial \mathcal V.$

\smallskip
\noindent
(iii) The unique  continuation  property  holds  for  the vertex Schr\"odinger  operator  in  the  exterior  domain $ \mathcal V_{ext} := \mathcal V_0 \setminus \mathcal V $  in the following sense: if $ \widehat u$ satisfies the equation $(- \widehat\Delta_{\mathcal V,\lambda} + \widehat Q_{\mathcal V,\lambda})\widehat u = 0$ in $ \mathcal V_{ext}$ and $ \widehat u = 0$  in $\{v \in \mathcal V_{ext}\, ; \, |v| > R\}$ for some $ R > 0$,  then $\widehat u = 0$ on $\mathcal V_{ext}.$}

\medskip
The D-N map for the edge  Schr\"odinger operator is defined by
\begin{equation}
\Lambda_{\mathcal E}(\lambda) : \widehat f \to \widehat u_e'(v), \quad
e(0) = v \in \partial\mathcal V,
\end{equation}
where $\widehat u_e$ is the solution to the equation
\begin{equation}
\left\{
\begin{split}
& (h_e - \lambda)\widehat u_e = 0, \quad on \quad \mathcal E,\\
& \widehat u_e = \widehat f, \quad on \quad \partial\mathcal E = \partial\mathcal V, \\
&\delta{-}coupling \ condition.
\end{split}
\right.
\end{equation}
The D-N map for the vertex  Schr\"odinger operator is defined by
\begin{equation}
\Lambda_{\mathcal V}(\lambda) : \widehat f(v) \to \frac{1}{\phi_{e0}(1,\lambda)}\widehat u(w),  \
v = e(0) \in \partial\mathcal V, \ w = e(1) \in \mathcal V^{o}.
\label{DefineDNlapforvertexoperator}
\end{equation}
where $\widehat u$ is the solution to the equation
\begin{equation}
\left\{
\begin{split}
& \left(- \widehat\Delta_{\mathcal V,\lambda} + \widehat Q_{\mathcal V,\lambda}\right)\widehat u(v) = 0, \quad v \in \mathcal V^{0}, \\
& \widehat u(v) = \widehat f(v), \quad v \in \partial\mathcal V.
\end{split}
\right.
\end{equation}
Then, by Lemma 3.1 in \cite{BEILL},  $\Lambda_{\mathcal E}(\lambda)$ and $\Lambda_{\mathcal V}(\lambda)$ determine each other under the assumptions (i), (ii), (iii).

\medskip

Our first main result is as follows:
\begin{theorem}
\label{TheoremIBVP}
Let $\Omega$ be  a bounded domain in the 2-dimensional square or hexagonal lattice having the properties (i), (ii), (iii), and consider the edge  Schr\"odinger operator $\widehat H_{\mathcal E} = \{- d^2/dz^2 + V_e(z)\}_{e \in \mathcal E}$ assuming the $\delta$-coupling condition and the Dirichlet boundary condition on $\partial \mathcal E$.  Then one can uniquely reconstruct $V_e(z)$ and $C_v$ for all $e \in \mathcal E$ and $v \in \mathcal V$
from the D-N map of $\widehat H_{\mathcal E}- \lambda$ for all values of the energy $\lambda$, provided we know $V_e(z)$ for all the edges $e$ adjacent to the boundary of $\mathcal V$.
\end{theorem}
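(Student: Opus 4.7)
The plan is to reduce to the vertex Schr\"odinger problem and then peel layers from the boundary inward. By Corollary~6.16 of \cite{BEILL}, the edge D-N map $\Lambda_{\mathcal E}(\lambda)$ and the vertex D-N map $\Lambda_{\mathcal V}(\lambda)$ determine each other under the assumptions (i)--(iii), so I may work with $\Lambda_{\mathcal V}(\lambda)$ at all $\lambda$ throughout. Let $\mathcal V_k \subset \mathcal V$ denote the $k$-th layer of interior vertices counted from the boundary; in the square case $\mathcal V_k$ is the $k$-th concentric rectangular frame of degree-$4$ vertices, and in the hexagonal case the analogous frame of degree-$3$ vertices in the hexagonal parallelogram.

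First I extract the values of every admissible solution on $\mathcal V_1$. For any boundary datum $\widehat f$, each $w \in \mathcal V_1$ has a unique boundary neighbour $v \in \partial \mathcal V$ via an edge $e$ whose potential $V_e$ is known by hypothesis, so $\phi_{e0}(1,\lambda)$ is explicitly computable and (\ref{DefineDNlapforvertexoperator}) gives $\widehat u(w,\lambda) = \phi_{e0}(1,\lambda)\bigl(\Lambda_{\mathcal V}(\lambda)\widehat f\bigr)(v)$. The vertex Schr\"odinger equation at $w$ then splits into contributions of boundary neighbours (entirely known terms in both the hopping sum and in $\widehat Q_{\mathcal V,\lambda}$) and contributions of $\mathcal V_2$-neighbours (the unknowns: values $\widehat u(w',\lambda)$, edge weights $1/\phi_e(1,\lambda)$, Weyl ratios $\phi'_{e0}(1,\lambda)/\phi_{e0}(1,\lambda)$ for edges $e$ from $w$ into $\mathcal V_2$, and the coupling constant $C_w$). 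Varying $\widehat f$ over all boundary data and exploiting that the unknowns have distinct $\lambda$-analytic signatures --- the edge weights and Weyl ratios are genuinely $\lambda$-dependent while $C_w$ is constant, and the universal asymptotic $\phi'_{e0}(1,\lambda)/\phi_{e0}(1,\lambda) \sim \sqrt{-\lambda}$ as $\lambda \to -\infty$ fixes the leading order of the Weyl ratios --- yields a solvable linear system from which each of these unknowns can be read off.

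Once $\phi_e(1,\lambda)$ is known on a newly accessible edge $e$ as a function of $\lambda$, its zero set is the Dirichlet spectrum of $-d^2/dz^2 + V_e$ on $[0,1]$; since $V_e(z) = V_e(1-z)$, the classical Borg uniqueness theorem for symmetric potentials recovers $V_e$ from this spectrum. With $V_e$ on the edges from $\mathcal V_1$ to $\mathcal V_2$ and the constants $C_w$ for $w \in \mathcal V_1$ in hand, the inverse problem restricts to the subdomain bounded by $\mathcal V_1$, in which the edges between $\mathcal V_1$ and $\mathcal V_2$ now play the role of boundary-adjacent edges with known potentials. Iterating peels the lattice layer by layer until every edge potential and every interior coupling constant in $\Omega$ is determined.

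The main obstacle will be the decoupling step at each layer: proving that varying $\widehat f$ together with $\lambda$ really does produce enough information to separate, at each interior vertex $w$, the individual contributions of its multiple $\mathcal V_2$-neighbours and to isolate $C_w$ from the edge Weyl ratios. This is precisely where the explicit combinatorics of the square and hexagonal lattices enter: every $\mathcal V_1$-vertex has only $d_v - 1$ or $d_v - 2$ interior neighbours arranged in a transparent geometric pattern, making the resulting linear system small and its invertibility directly verifiable for each of the two lattices. A lattice-free argument would presumably require a more refined combinatorial hypothesis; the restriction to these two specific lattices in the statement is what makes the explicit computation tractable.
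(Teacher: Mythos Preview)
Your reduction to the vertex D-N map is fine, and the use of Borg's theorem for symmetric potentials is exactly the right tool once a single $\phi_e(1,\lambda)$ has been isolated. The gap is in the step you yourself flag as ``the main obstacle'': the decoupling at a $\mathcal V_1$-vertex does not go through as written.

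Consider a typical $w\in\mathcal V_1$ in the square lattice with one boundary neighbour $v_0$, two $\mathcal V_1$-neighbours $w_1,w_2$, and one $\mathcal V_2$-neighbour $w_3$. The vertex equation at $w$ reads
\[
\frac{\widehat u(w_3)}{\phi_{e_3}(1,\lambda)}
=\Bigl(\sum_{i=0}^{3}\frac{\phi'_{e_i}(1,\lambda)}{\phi_{e_i}(1,\lambda)}+C_w\Bigr)\widehat u(w)
-\frac{\widehat f(v_0)}{\phi_{e_0}(1,\lambda)}
-\frac{\widehat u(w_1)}{\phi_{e_1}(1,\lambda)}
-\frac{\widehat u(w_2)}{\phi_{e_2}(1,\lambda)}.
\]
Here $\widehat u(w),\widehat u(w_1),\widehat u(w_2)$ and $\phi_{e_0}$ are known, but $\phi_{e_1},\phi_{e_2},\phi_{e_3}$ are all unknown (the edges $ww_1,ww_2$ lie inside $\mathcal V_1$, not on the boundary), $\widehat u(w_3)$ is unknown, and the unknowns appear \emph{multiplicatively}: you have products $\widehat u(w_3)/\phi_{e_3}$ and $\widehat u(w_i)/\phi_{e_i}$ with both factors undetermined (the first) or one factor undetermined (the second). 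Varying $\widehat f$ moves all four solution values simultaneously along an unknown linear map, so it does not separate these products. Nor do $\lambda$-asymptotics: every Weyl ratio has the same leading behaviour $\sqrt{-\lambda}$, and every $\phi_e(1,\lambda)$ is an entire function with the same growth order, so there is no ``analytic signature'' distinguishing edge $e_1$ from edge $e_2$ from edge $e_3$.

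The paper avoids this by sweeping along \emph{diagonal} lines $A_k$ rather than concentric frames, and by constructing (via a mixed Dirichlet--Neumann partial-data lemma) a special solution $\widehat u$ that \emph{vanishes identically below} $A_k$. At a vertex just below $A_k$ the equation then collapses to a two-term relation $\widehat u(p_0)/\phi_{e_0}(1,\lambda)+\widehat u(p_1)/\phi_{e_1}(1,\lambda)=0$, from which $\phi_{e_1}(1,\lambda)$ is read off directly; at a vertex on $A_k$ two of the four neighbour values are zero, so the remaining equation isolates a single unknown Weyl ratio plus $C_v$, and the poles pick out the Dirichlet spectrum while the regular part gives $C_v$. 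The vanishing-below-a-line property is what kills the multiplicative coupling of unknowns; your concentric-frame geometry has no analogue of it.
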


\subsection{Inverse scattering problem}

In \cite{BEILL}, we have discussed the spectral and scattering theory for Schr{\"o}dinger operators on a class of locally perturbed periodic quantum graphs. Here we will use these results for perturbations of square and hexagonal lattices quantum graphs preserving the lattice structure. Assume that the length of  each edge is one. The assumptions on $V_e(z)$ and $C_v$ are as follows:

\smallskip
\noindent
{\it
(iv)  There exists a constant $C_0 \in \mathbb R$ such that $C_v = C_0$ except for a finite number of vertices $v \in \mathcal V$.

\smallskip
\noindent
(v)  There exists $V_0 \in L^2((0,1))$ such that $V_e(z)  = V_0(z)$ except for a finite number of edges $e \in \mathcal E$.}

\medskip
One can then define the S-matrix $S(\lambda)$ for the Hamiltonian of the quantum graph built on $\Gamma$. Consider a bounded domain  $\Omega$ in $\mathcal E$
which contains all the indicated perturbations, in particular, assume that $V_e(z) = V_0(z)$ holds on any edge $e$ adjacent to $\partial\mathcal V$.
In Corollary 6.16 in \cite{BEILL}, we have proven that the S-matrix $S(\lambda)$ and the D-N map $\Lambda_{\mathcal E}(\lambda)$ for $\mathcal V$ determine each other.
Applying Theorem \ref{TheoremIBVP}, we obtain our second main result:

\begin{theorem}
\label{TheoremSquareISP}
Consider the Schr\"odinger operator $\widehat H_{\mathcal E} = \{- d^2/dz^2 + V_e(z)\}_{e \in \mathcal E}$ on the 2-dimensional square or hexagonal lattice $\Gamma$ satisfying the conditions (i)-(v). Then one can uniquely reconstruct $V_e(z)$ and $C_v$ for all $e \in \mathcal E$ and $v \in \mathcal V$ from the knowledge of the S-matrix $S(\lambda)$ of $\widehat H_{\mathcal E}$ for all energies $\lambda$.
\end{theorem}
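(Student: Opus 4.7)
The proof will be a short reduction that chains the already-stated results: pass from the S-matrix to a D-N map on a bounded domain, then invoke Theorem~\ref{TheoremIBVP}. The plan is as follows.

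First, I would choose the bounded domain. Because of hypotheses (iv) and (v), the perturbations form a finite set of edges and vertices, so I can pick a rectangle (in the square case) or a hexagonal parallelogram (in the hexagonal case) $\Omega \subset \mathbb R^2$ that encloses every edge with $V_e \neq V_0$ and every vertex with $C_v \neq C_0$, with at least one extra buffer layer of unperturbed edges meeting $\partial\mathcal V$. With this choice the subgraph $\Gamma = \Gamma_0 \cap \Omega$ inherits the structural conditions (i), (ii), (iii), and every edge adjacent to $\partial\mathcal V$ satisfies $V_e = V_0$. So the boundary-edge potentials are known (they coincide with the potential of the reference periodic operator against which $S(\lambda)$ is defined).

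Next, I apply Corollary 6.16 of \cite{BEILL}: for a domain $\Omega$ of the type above, the S-matrix $S(\lambda)$ of the full-lattice Hamiltonian and the D-N map $\Lambda_{\mathcal E}(\lambda)$ of the edge Schr\"odinger operator on $\Omega$ determine one another for every admissible energy $\lambda$. Therefore the assumed knowledge of $S(\lambda)$ at all energies yields $\Lambda_{\mathcal E}(\lambda)$ at all energies. With $\Lambda_{\mathcal E}(\lambda)$ in hand and $V_e = V_0$ known on every edge incident to $\partial\mathcal V$, all hypotheses of Theorem~\ref{TheoremIBVP} are verified for $\Omega$. That theorem then reconstructs $V_e(z)$ and $C_v$ for every $e \in \mathcal E\cap \Omega$ and every $v \in \mathcal V \cap \Omega$. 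Outside $\Omega$, by (iv) and (v) we have $V_e = V_0$ and $C_v = C_0$, so the reconstruction extends to the whole of~$\mathcal E$ and~$\mathcal V$.

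The step that needs the most care is the one just used implicitly: namely, that $V_0$ and $C_0$ are available to the inversion scheme. In the standard scattering formalism these are part of the reference operator that defines $S(\lambda)$, and so may be regarded as given. If one instead wishes to extract them intrinsically from $S(\lambda)$, they can be read off from the band structure of the unperturbed periodic lattice (which is visible in the set of $\lambda$ where $S(\lambda)$ is defined and unitary) combined with the explicit Floquet-type relation linking this band structure to $(V_0,C_0)$ on a single unit edge; in 2D this link is injective for generic potentials and gives $V_0$ from the discriminant of $-d^2/dz^2 + V_0$ on $[0,1]$ together with $C_0$. Handling the non-generic cases of this auxiliary determination is the only technical point that does not reduce directly to the two cited inputs.
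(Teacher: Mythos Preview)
Your reduction is correct and coincides with the paper's own argument: choose a bounded domain containing all perturbations with $V_e=V_0$ on the boundary edges, invoke Corollary~6.16 of \cite{BEILL} to pass from $S(\lambda)$ to $\Lambda_{\mathcal E}(\lambda)$, and then apply Theorem~\ref{TheoremIBVP}. The paper handles your last paragraph simply by declaring that $V_0$ and $C_0$ are known a~priori as part of the reference operator, so your speculative intrinsic extraction is unnecessary here.
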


Let us remark that by assumption we know $V_0(z)$ and $C_0$ a priori.

\subsection{Related works}
The study of spectral and scattering theory on quantum graphs is now making a rapid progress. Although  the topics of this paper are restricted to inverse problems on quantum graphs for square and hexagonal lattices,
there are plenty of articles devoted to this subject. Our previous work \cite{BILL2, BEILL}, on which the present paper is based, deals with the forward problem and some types of inverse problems. We add here a brief look at the related works.

A general survey of discrete graph and quantum graph properties can be found in the monographs \cite{BerkolaikoKuchment2013, KN22, Ku24, Post12}, see also the paper \cite{EKMN17}.
For dicussion of the $\delta$-coupling and related topics we refer to \cite{ChExTu10, Exner96, Exner97, ExnerKovarik2015, ExnerPost09}. The relation between edge Schr\"odinger operators and vertex Schr\"odinger operators was studied in \cite{Below85, BPG08, Cattaneo97, Exner97, KostrykinSchrader1999, Pankrashkin13}. Spectral properties of quantum graphs are discussed in \cite{KoLo07, KoSa14, KoSa15, KoSa15b, Niikuni16, Pank06, ParRich18}. The wave operators for discrete Schr{\"o}dinger operators are investigated in \cite{Nakamura14, ParRich18, Tadano19}. Various inverse problems for the quantum graphs are studied in \cite{FrYu01}, see also \cite{GutSmil01, PoTru}. Finally, for earlier results on the inverse scattering for discrete Schr\"odinger operators on locally perturbed periodic graphs see \cite{A1, AIM16, AIM18, AvdBelMat11, Bel04, BILL2021, Pivo00, VisComMirSor11, YangXu18, Yurk05, Yurko16(1)}.

\subsection{Acknowledgement}
The authors express their gratitude for the funding obtained. P.E. was supported  by the EU under the Marie Sk{\l}odowska-Curie Grant No 873071. H.I. was supported by Grant-in-Aid for Scientific Research (C) 20K03667 and (C) 24K06768 Japan Society for the Promotion of Science. H.M. was supported by Grant-in-aid for young scientists 20K14327 Japan Society for the Promotion of Science. The work of E.B. was supported by the Research Council of Finland
through the Flagship of Advanced Mathematics for Sensing, Imaging and
Modelling (decision number 359183).

\section{Square lattice}
\label{SectionSquareLattice}
As we have proven in Theorem 5.7 of \cite{BEILL}, the S-matrix and the D-N map determine each other, if the unique continuation theorem holds in the exterior domain. Then, we can change the domain $\Omega$ as long as the conditions (i), (ii), (iii) hold. Therefore, to prove  Theorem \ref{TheoremIBVP}, we have only to consider the case in which $\Omega$ is a rectangular domain as below.

Given a square lattice $\Gamma_0 = \{\mathcal V_0, \mathcal E_0\}$ in ${\mathbb R}^2$, let ${\Omega}$ be its rectangular domain as sketched in Figure~\ref{SquareIntDomain}, and $\Gamma = \{\mathcal V, \mathcal E\}$, where $\mathcal V = \mathcal V_0\cap \Omega$, $\mathcal E = \mathcal E_0 \cap \Omega$. The black dots  there denote the boundary points satisfying $d_v = 1$ for $v \in \partial \mathcal V$, while $d_v = 4$ for $v \in \mathcal V^{o}$. The  boundary $\partial \mathcal V$ consists of four parts $(\partial \mathcal V)_T$, $(\partial \mathcal V)_B$, $(\partial \mathcal V)_L$, $(\partial \mathcal V)_R$, where the top $(\partial \mathcal V)_T$ and the left side $(\partial \mathcal V)_L$ are given by
\begin{equation}
(\partial \mathcal V)_T = \{a_1, a_2, \dots,a_m\},\quad (\partial \mathcal V)_L = \{b_1, b_2,\dots, b_n\},
\end{equation}
and the bottom $(\partial \mathcal V)_B$ and the right side $(\partial \mathcal V)_R$ are defined similarly.
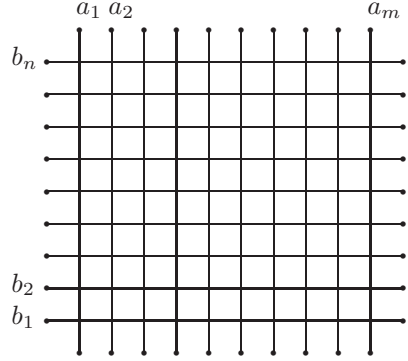
\begin{figure}[hbtp]
\begin{center}
\setlength{\unitlength}{0.43cm}
\begin{picture}(11,12)
\put(0,1){\line(1,0){11}}
\put(0,2){\line(1,0){11}}
\put(0,3){\line(1,0){11}}
\put(0,4){\line(1,0){11}}
\put(0,5){\line(1,0){11}}
\put(0,6){\line(1,0){11}}
\put(0,7){\line(1,0){11}}
\put(0,8){\line(1,0){11}}
\put(0,9){\line(1,0){11}}

\put(1,0){\line(0,1){10}}
\put(2,0){\line(0,1){10}}
\put(3,0){\line(0,1){10}}
\put(4,0){\line(0,1){10}}
\put(5,0){\line(0,1){10}}
\put(6,0){\line(0,1){10}}
\put(7,0){\line(0,1){10}}
\put(8,0){\line(0,1){10}}
\put(9,0){\line(0,1){10}}
\put(10,0){\line(0,1){10}}

\put(1,0){\circle*{0,2}}
\put(2,0){\circle*{0,2}}
\put(3,0){\circle*{0,2}}
\put(4,0){\circle*{0,2}}
\put(5,0){\circle*{0,2}}
\put(6,0){\circle*{0,2}}
\put(7,0){\circle*{0,2}}
\put(8,0){\circle*{0,2}}
\put(9,0){\circle*{0,2}}
\put(10,0){\circle*{0,2}}

\put(1,10){\circle*{0,2}}
\put(2,10){\circle*{0,2}}
\put(3,10){\circle*{0,2}}
\put(4,10){\circle*{0,2}}
\put(5,10){\circle*{0,2}}
\put(6,10){\circle*{0,2}}
\put(7,10){\circle*{0,2}}
\put(8,10){\circle*{0,2}}
\put(9,10){\circle*{0,2}}
\put(10,10){\circle*{0,2}}

\put(0,1){\circle*{0,2}}
\put(0,2){\circle*{0,2}}
\put(0,3){\circle*{0,2}}
\put(0,4){\circle*{0,2}}
\put(0,5){\circle*{0,2}}
\put(0,6){\circle*{0,2}}
\put(0,7){\circle*{0,2}}
\put(0,8){\circle*{0,2}}
\put(0,9){\circle*{0,2}}

\put(11,1){\circle*{0,2}}
\put(11,2){\circle*{0,2}}
\put(11,3){\circle*{0,2}}
\put(11,4){\circle*{0,2}}
\put(11,5){\circle*{0,2}}
\put(11,6){\circle*{0,2}}
\put(11,7){\circle*{0,2}}
\put(11,8){\circle*{0,2}}
\put(11,9){\circle*{0,2}}

\put(0.9,10.5){$a_1$}
\put(1.9,10.5){$a_2$}
\put(9.9,10.5){$a_m$}

\put(-1.1,0.9){$b_1$}
\put(-1.1,1.9){$b_2$}
\put(-1.1,8.9){$b_n$}

\end{picture}
\end{center}
\caption{Rectangular domain}
\label{SquareIntDomain}
\end{figure}

\noindent Let  $- \widehat\Delta_{\mathcal V, \lambda} + \widehat Q_{\mathcal V,\lambda}$ be the vertex Hamiltonian introduced in Subsection \ref{SubsecgraphHamiltonianedgeHamiltonian}.
By {\it a~cross} with center $v_0$ we mean the graph shown in Figure~\ref{Cross}.
\begin{figure}[hbtp]
\begin{center}
\setlength{\unitlength}{0.8cm}
\begin{picture}(9,5)
\put(5, 4.5){\circle*{0,2}}
\put(5,2.5){\circle*{0,2}}
\put(5,0.5){\circle*{0,2}}
\put(3,2.5){\circle*{0,2}}
\put(7,2.5){\circle*{0,2}}
\put(3,2.5){\line(1,0){4}}
\put(5,0.5){\line(0,1){4}}
\put(5.2,2.7){$v_0$}
\put(7.2,2.5){$v_1$}
\put(4.8,4.8){$v_2$}
\put(2.4,2.5){$v_3$}
\put(4.8,0){$v_4$}
\end{picture}
\end{center}
\caption{Cross}
\label{Cross}
\end{figure}
Denoting by $e_i$ the edge with the endpoints $v_0$ and $v_i$, we can rewrite the equation $(- \widehat \Delta_{\mathcal V,\lambda} + \widehat Q_{\mathcal V,\lambda})\widehat u = 0$ as
\begin{equation}
\sum_{i=1}^4\frac{1}{\phi_{e_i}(1,\lambda)} \widehat u(v_i) =
\Big(\sum_{i=1}^4\frac{\phi'_{e_i}(1,\lambda)}{\phi_{e_i}(1,\lambda)} + C_{v_0}\Big)\widehat u(v_0).
\label{Equationwithcenterv0}
\end{equation}
The key to the inverse procedure is the following partial data problem \cite[Lemma~6.1]{AIM18}. Denoting $\mathcal V^{o} = \mathcal V\setminus{\partial\mathcal V}$, we define Neumann derivative\footnote{Note that this definition of Neumann derivative differs from that in \cite{BEILL}; here we adopt the one employed in \cite{AIM18}.} on the boundary for the vertex Hamiltonian by
\begin{equation}
\big(\partial_{\nu}\hat u\big)(v) = -  \frac{1}{d_v}\sum_{w \sim v, w \in \mathcal V^o}\frac{1}{\phi_{e0}(w,\lambda)}\widehat u(w), \quad v \in \partial\mathcal V,
\end{equation}
where $\phi_{e0}(w,\lambda)$ is given in (\ref{Definephie0}).


\begin{lemma}\label{S6partialDNdata}
(1) Given partial Dirichlet data $\widehat f$ on $\partial\mathcal V\setminus(\partial \mathcal V)_R$, and partial Neumann data $\widehat g$ on $(\partial\mathcal V)_L$, there is a unique solution $\widehat u$ on $\mathcal V$ to the equation
\begin{equation}
\left\{
\begin{split}
& (- \widehat\Delta_{\mathcal V,\lambda} + \widehat Q_{\mathcal V,\lambda})\widehat u = 0 \quad {\it in}\; \mathcal V^{o},\\
& \widehat u =\widehat f \quad {\it on}\; \partial\mathcal V\setminus(\partial\mathcal V)_R, \\
& \partial_{\nu}\widehat u = \widehat g \quad {\it on} \; (\partial\mathcal V)_L.
\end{split}
\right.
\label{Lemma61Equation}
\end{equation}
\noindent
(2) Given the D-N map $\Lambda_{\mathcal V}(\lambda)$, partial Dirichlet data $\widehat f_2$ on $\partial\mathcal V\setminus(\partial\mathcal V)_R$ and partial Neumann data $\widehat g$ on $(\partial\mathcal V)_{L}$, there exists a unique $\widehat f$ on $\partial\mathcal V$ such that $\widehat f = \widehat f_2$ on $\partial\mathcal V\setminus(\partial\mathcal V)_R$ and $\Lambda_{\mathcal V}(\lambda)\widehat f = \widehat g$ on $(\partial\mathcal V)_{L}$. Moreover, $\widehat f$ is uniquely determined by the D-N map.
\end{lemma}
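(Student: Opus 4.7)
My plan is to prove (1) by an explicit left-to-right propagation that simultaneously yields existence and uniqueness, and then deduce (2) from (1) by a short linear-algebra argument. Label the columns of $\mathcal V$ from left to right as $0,1,\ldots,N,N+1$, so column $0$ is $(\partial\mathcal V)_L$, column $N+1$ is $(\partial\mathcal V)_R$, and each intermediate column consists of interior vertices together with one vertex of $(\partial\mathcal V)_T$ at the top and one of $(\partial\mathcal V)_B$ at the bottom.

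The first step of (1) is to recover $\widehat u$ on column $1$. For $b_k\in(\partial\mathcal V)_L$ assumption (ii) gives $d_{b_k}=1$, so the sum in the definition of $\partial_\nu\widehat u(b_k)$ collapses to a single term and reads $\widehat g(b_k)=-\phi_{e0}(1,\lambda)^{-1}\widehat u(w_k)$, where $w_k$ is the unique neighbor of $b_k$ in $\mathcal V^o$; this produces $\widehat u$ at every interior vertex of column $1$, while the top and bottom vertices of column $1$ are read off directly from $\widehat f$. Next I induct on $j$: assuming $\widehat u$ is already known on columns $j-1$ and $j$, I apply (\ref{Equationwithcenterv0}) at each interior vertex $v_0$ of column $j$. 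Of the five values appearing there, all are known except for the right neighbor $v_1$ in column $j+1$; its coefficient $\phi_{e_1}(1,\lambda)^{-1}$ is nonzero (else the vertex Hamiltonian (\ref{VertexSchrodingerop}) would not be defined), so $\widehat u(v_1)$ is determined. Combined with the Dirichlet values at the top and bottom of column $j+1$, this furnishes $\widehat u$ on all of column $j+1$. Iterating up to $j=N$ fills in column $N+1=(\partial\mathcal V)_R$, and since the procedure is deterministic at each step, the solution exists and is unique.

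For (2), part (1) produces a unique $\widehat u$ from $\widehat f_2$ and $\widehat g$. I define $\widehat f:=\widehat f_2$ on $\partial\mathcal V\setminus(\partial\mathcal V)_R$ and $\widehat f:=\widehat u$ on $(\partial\mathcal V)_R$; then $\widehat u$ is the Dirichlet solution with data $\widehat f$, so $\Lambda_{\mathcal V}(\lambda)\widehat f$ restricted to $(\partial\mathcal V)_L$ equals $\partial_\nu\widehat u=\widehat g$, as required. For uniqueness, suppose $\widehat f'$ is another such extension of $\widehat f_2$; then $\widehat f-\widehat f'$ is supported on $(\partial\mathcal V)_R$, and its Dirichlet extension solves the system (\ref{Lemma61Equation}) with zero Dirichlet data on the three sides and zero Neumann data on $(\partial\mathcal V)_L$, hence vanishes by the uniqueness half of (1); thus $\widehat f=\widehat f'$.

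The main delicacy is the bookkeeping at the top and bottom of each column, where an interior vertex of column $j$ adjacent to $(\partial\mathcal V)_T$ or $(\partial\mathcal V)_B$ calls for its vertical neighbor on that boundary; assumption (ii) that $d_v=1$ on $\partial\mathcal V$ ensures such a neighbor contributes only its prescribed Dirichlet value to (\ref{Equationwithcenterv0}) and introduces no additional unknown. Beyond this, the only structural requirement is the standing nondegeneracy $\phi_e(1,\lambda)\ne0$, which is already implicit in the very definition of the vertex Schr\"odinger operator.
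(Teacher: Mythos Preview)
Your argument is correct and is essentially the approach the paper has in mind: the paper does not prove Lemma~\ref{S6partialDNdata} in-line but simply cites \cite[Lemma~6.1]{AIM18}, whose proof is precisely the left-to-right column sweep you carry out (the same sweep is spelled out for the hexagonal case in the proof of Lemma~\ref{S3AboveAkLemma}). Your handling of part (2)---extend $\widehat f_2$ by the trace of the solution from (1), then use uniqueness in (1) applied to the difference---is also the intended linear-algebra reduction; note that since the defining condition $\Lambda_{\mathcal V}(\lambda)\widehat f=\widehat g$ on $(\partial\mathcal V)_L$ involves only the D-N map, the uniqueness you proved already implies that the relevant block of $\Lambda_{\mathcal V}(\lambda)$ is invertible and hence that $\widehat f$ can be \emph{computed} from the D-N map alone, which is the content of the ``Moreover'' clause.
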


Let $A_k$ be the line with the slope $-1$ passing through $a_k$ as sketched in Figure~\ref{AkAk-1Square}. Denote the vertices on $A_k \cap \mathcal V$ by
$$
a_k = a_{k,0},\ \  a_{k,1},\ \ \dots\ \ ,\ \  a_k^{\ast},
$$
successively. Then, $A_k \cap \partial\mathcal V = \{a_k, a_k^{\ast}\}$.
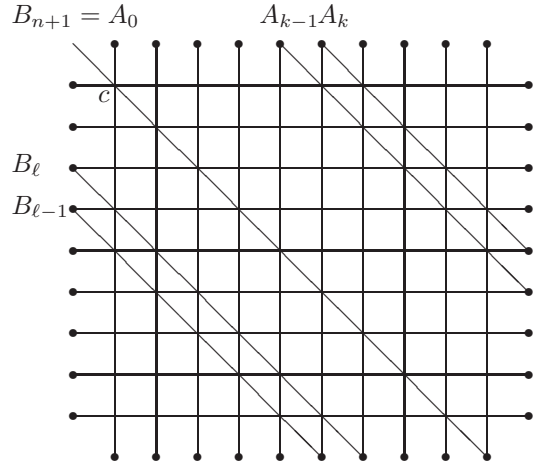
\begin{figure}[hbtp]
\begin{center}
\setlength{\unitlength}{0.55cm}
\begin{picture}(11,12)
\put(0,1){\line(1,0){11}}
\put(0,2){\line(1,0){11}}
\put(0,3){\line(1,0){11}}
\put(0,4){\line(1,0){11}}
\put(0,5){\line(1,0){11}}
\put(0,6){\line(1,0){11}}
\put(0,7){\line(1,0){11}}
\put(0,8){\line(1,0){11}}
\put(0,9){\line(1,0){11}}
\put(0.6,8.6){$c$}

\put(1,0){\line(0,1){10}}
\put(2,0){\line(0,1){10}}
\put(3,0){\line(0,1){10}}
\put(4,0){\line(0,1){10}}
\put(5,0){\line(0,1){10}}
\put(6,0){\line(0,1){10}}
\put(7,0){\line(0,1){10}}
\put(8,0){\line(0,1){10}}
\put(9,0){\line(0,1){10}}
\put(10,0){\line(0,1){10}}

\put(1,0){\circle*{0,2}}
\put(2,0){\circle*{0,2}}
\put(3,0){\circle*{0,2}}
\put(4,0){\circle*{0,2}}
\put(5,0){\circle*{0,2}}
\put(6,0){\circle*{0,2}}
\put(7,0){\circle*{0,2}}
\put(8,0){\circle*{0,2}}
\put(9,0){\circle*{0,2}}
\put(10,0){\circle*{0,2}}

\put(1,10){\circle*{0,2}}
\put(2,10){\circle*{0,2}}
\put(3,10){\circle*{0,2}}
\put(4,10){\circle*{0,2}}
\put(5,10){\circle*{0,2}}
\put(6,10){\circle*{0,2}}
\put(7,10){\circle*{0,2}}
\put(8,10){\circle*{0,2}}
\put(9,10){\circle*{0,2}}
\put(10,10){\circle*{0,2}}

\put(0,1){\circle*{0,2}}
\put(0,2){\circle*{0,2}}
\put(0,3){\circle*{0,2}}
\put(0,4){\circle*{0,2}}
\put(0,5){\circle*{0,2}}
\put(0,6){\circle*{0,2}}
\put(0,7){\circle*{0,2}}
\put(0,8){\circle*{0,2}}
\put(0,9){\circle*{0,2}}

\put(11,1){\circle*{0,2}}
\put(11,2){\circle*{0,2}}
\put(11,3){\circle*{0,2}}
\put(11,4){\circle*{0,2}}
\put(11,5){\circle*{0,2}}
\put(11,6){\circle*{0,2}}
\put(11,7){\circle*{0,2}}
\put(11,8){\circle*{0,2}}
\put(11,9){\circle*{0,2}}

\put(0,10){\line(1,-1){10}}
\put(-1.5,10.5){$B_{n+1} = A_0$}

\put(6,10){\line(1,-1){5}}
\put(5,10){\line(1,-1){6}}
\put(5.9,10.5){$A_k$}
\put(4.5,10.5){$A_{k-1}$}

\put(0,7){\line(1,-1){7}}
\put(0,6){\line(1,-1){6}}
\put(-1.5,6.9){$B_{\ell}$}
\put(-1.5,5.9){$B_{\ell-1}$}

\end{picture}
\end{center}
\caption{Lines $A_k$ and $B_{\ell}$}
\label{AkAk-1Square}
\end{figure}

\begin{lemma}
\label{Lemmaspecialboundarydata}
(1) There exists a unique solution $\widehat u$ to the equation
\begin{equation}
\big(- \widehat\Delta_{\mathcal V,\lambda} + \widehat Q_{\mathcal V,\lambda}\big)\widehat u = 0 \quad {\it in} \; \mathcal V^{o},
\label{S2BVP}
\end{equation}
with the partial Dirichlet data  $\widehat f$  \textcolor{blue}{on $\partial\mathcal V\setminus(\partial\mathcal V)_R$} 
\HOX{\textcolor{red}{Is this correction OK?}} such that
\begin{equation}
\left\{
\begin{split}
& \widehat f(a_k) = 1, \\
& \widehat f(v) = 0 \quad {\it for} \;
v \in \partial\mathcal V\setminus\big((\partial\mathcal V)_R
\cup\{a_k
\}\big)
\end{split}
\right.
\nonumber
\end{equation}
\HOX{ $a_k^{\ast}$ is removed.}
and the partial Neumann data $\widehat g = 0$ on $(\partial\mathcal V)_L$. \\
\noindent
(2) This solution satisfies
\begin{equation}
\widehat u(v) = 0  \quad \text{in vertices below (not including) the line}\;A_k.
\nonumber
\end{equation}
(3)
Moreover, the values of $\widehat f$ on $(\partial \mathcal V)_R$ and
$a_k^{\ast}$ are uniquely determined by the D-N map.
\end{lemma}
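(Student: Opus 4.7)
The plan is to dispose of part (1) by a direct appeal to Lemma~\ref{S6partialDNdata}(1), to prove part (2) by a propagation argument on the vertex values, and then to read off part (3) from Lemma~\ref{S6partialDNdata}(2).

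For part (2), I would first observe that each $b_\ell \in (\partial\mathcal V)_L$ has $d_{b_\ell}=1$ and a unique interior neighbor $w_\ell$, so the condition $\widehat g(b_\ell)=0$ becomes $-\phi_{e0}(w_\ell,\lambda)^{-1}\widehat u(w_\ell)=0$; for the generic $\lambda$ at which the vertex operator is defined this forces $\widehat u(w_\ell)=0$. Combining this with the prescribed Dirichlet vanishing on $(\partial\mathcal V)_L \cup (\partial\mathcal V)_B \cup \bigl((\partial\mathcal V)_T \setminus \{a_k\}\bigr)$, I conclude that $\widehat u$ vanishes on the two leftmost columns $x=0,1$. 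I would then iterate column by column: assuming $\widehat u \equiv 0$ on columns $x-1$ and $x$, the cross equation~\eqref{Equationwithcenterv0} centered at each interior $(x,y)$ relates the four neighbors and the center and so isolates $\widehat u(x+1,y)=0$. The top neighbor $(x,n+1)=a_x$ satisfies $\widehat f(a_x)=0$ as long as $x\ne k$, so the step goes through for $x=1,\dots,k-1$, and one more use of the cross equation at the vertices of column $k-1$ yields $\widehat u(k,y)=0$ for $y=0,\dots,n$.

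At this stage I would switch to a diagonal induction on $j\ge 0$ to establish
\[
\widehat u(k+j,y)=0\quad\text{for all } 0\le y\le n-j,
\]
which is exactly the claim that $\widehat u$ vanishes at every vertex strictly below $A_k$. In the inductive step I apply the cross equation at the interior vertex $(k+j,y)$ with $y\le n-j-1$: the left neighbor is zero by the preceding column's step, both vertical neighbors are zero by the induction or (for $y=1$) by the vanishing bottom Dirichlet datum, and the center itself is zero by the induction, so that the cross equation leaves $\widehat u(k+j+1,y)=0$. Crucially the induction stops one level below the diagonal itself, since the on-diagonal vertex $(k+j,n+1-j)\in A_k$ need not vanish. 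This finishes part (2), and in the process it also shows that $\widehat f$ vanishes on the portion of $(\partial\mathcal V)_R \cup (\partial\mathcal V)_B$ lying strictly below $A_k$.

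For part (3), I would invoke Lemma~\ref{S6partialDNdata}(2) with $\widehat f_2$ the prescribed partial Dirichlet data and $\widehat g=0$: it guarantees that the D-N map $\Lambda_{\mathcal V}(\lambda)$ uniquely determines the extension of $\widehat f_2$ to all of $\partial\mathcal V$, hence in particular on $(\partial\mathcal V)_R$, which under the conventions of Figure~\ref{AkAk-1Square} contains $a_k^*$. The main obstacle throughout is a bookkeeping one: each application of the cross equation requires the relevant values $\phi_{e_i}(1,\lambda)$ to be nonzero so that one can algebraically solve for the single unknown neighbor. This genericity hypothesis on $\lambda$ is already built into the definition~\eqref{VertexSchrodingerop} of $\widehat\Delta_{\mathcal V,\lambda}$, so the propagation argument applies verbatim for all $\lambda$ at which $\Lambda_{\mathcal V}(\lambda)$ is defined.
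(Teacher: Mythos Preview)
Your argument is correct and follows exactly the propagation scheme that the paper invokes by citing \cite[Lemma~6.2]{AIM18} (which in turn rests on Lemma~6.1 there, restated here as Lemma~\ref{S6partialDNdata}); your column-then-diagonal induction simply makes that scheme explicit. One minor remark: $a_k^{\ast}$ need not lie on $(\partial\mathcal V)_R$---when the diagonal $A_k$ exits through the bottom side one has $a_k^{\ast}\in(\partial\mathcal V)_B$---but in that case $\widehat f(a_k^{\ast})=0$ is already part of the prescribed partial Dirichlet data, so claim~(3) is immediate there and your appeal to Lemma~\ref{S6partialDNdata}(2) covers the remaining case.
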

\begin{proof}
The argument is the same as in Lemma 6.2 of \cite{AIM18}, which is in turn based on Lemma 6.1 of the same paper. The assertion (3) follows from
claims (2), (3) of the indicated Lemma 6.1.
\end{proof}

We determine $V_e(z)$ and $C_v$ inductively by sliding down the line $A_k$. Assuming that we know $V_e(z)$ and $C_v$ for all $e$ and $v$ above $A_k$, we determine $V_e(z)$ between $A_k$ and $A_{k-1}$,  $C_v$ for $v \in A_k$. Since the D-N map is given, by {the} induction hypothesis,
we can then compute $\widehat u(v,\lambda)$ in Lemma \ref{Lemmaspecialboundarydata} for all $v$ above $A_k$, {including this line as well,} as a meromorphic function of $\lambda$ by using the equation (\ref{S2BVP})\footnote{More detailed explanation is given in the proof of Lemma \ref{S3AboveAkLemma} {below}.}.

{Let us} compute the values of $\widehat u(v,\lambda)$ more carefully. Observing Figure \ref{ZigzagSquare},
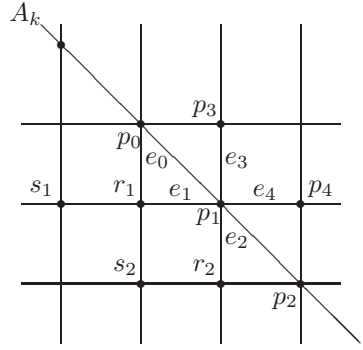
\begin{figure}[hbtp]
\begin{center}
\setlength{\unitlength}{0.53cm}
\begin{picture}(9,9)
\put(0.7,7.6){$A_k$}
\put(1,5){\line(1,0){8}}
\put(1,3){\line(1,0){8}}
\put(1,1){\line(1,0){8}}
\put(2,-0.5){\line(0,1){8}}
\put(4,-0.5){\line(0,1){8}}
\put(6,-0.5){\line(0,1){8}}
\put(8,-0.5){\line(0,1){8}}
\put(1.5,7.5){\line(1,-1){8}}
\put(1.2,3.3){$s_1$}
\put(2,3){\circle*{0.2}}
\put(2,7){\circle*{0,2}}
\put(3.3,3.3){$r_1$}
\put(4,3){\circle*{0.2}}
\put(5.4,2.6){$p_1$}
\put(6.8,3.2){$e_4$}
\put(4.7,3.2){$e_1$}
\put(6,3){\circle*{0.2}}
\put(8.2,3.3){$p_4$}
\put(8,3){\circle*{0.2}}
\put(3.4,4.5){$p_{0}$}
\put(4.1,4){$e_0$}
\put(4,5){\circle*{0.2}}
\put(3.3,1.3){$s_2$}
\put(4,1){\circle*{0.2}}

\put(5.3,5.3){$p_3$}
\put(6,5){\circle*{0.2}}
\put(6.1,2){$e_2$}
\put(6.1,4){$e_3$}
\put(5.3,1.3){$r_2$}
\put(6,1){\circle*{0.2}}

\put(7.3,0.5){$p_2$}
\put(8,1){\circle*{0.2}}

\end{picture}
\end{center}
\caption{{Diagonal} line in the square lattice}
\label{ZigzagSquare}
\end{figure}
we use the equation (\ref{Equationwithcenterv0}) for the cross with center $r_1$. As $\widehat u = 0$ at $r_1, r_2, s_1, s_2$, we have
\begin{equation}
\frac{1}{\phi_{e_0}(1,\lambda)}\widehat u(p_0,\lambda) +
\frac{1}{\phi_{e_1}(1,\lambda)}\widehat u(p_1,\lambda) = 0.
\label{Crosswithcenterr1}
\end{equation}
Observing {next} the cross with center $p_1$, we {obtain}
\begin{equation}
\frac{1}{\phi_{e_3}(1,\lambda)}\widehat u(p_3,\lambda) +
\frac{1}{\phi_{e_4}(1,\lambda)}\widehat u(p_4,\lambda) =
\left(
\sum_{i=1}^4\frac{\phi'_{e_i}(1,\lambda)}{\phi_{e_i}(1,\lambda)} + C_{p_1}\right)\widehat u(p_1,\lambda).
\label{Equation2intheSquarelattice}
\end{equation}
\begin{lemma}
\label{LemmaSquareDetermineVeCv1}
(1) Assume that we know $\widehat u(v,\lambda)$ for $v = p_0, p_1$, and $V_e(z)$ for $e = e_0$. Then, we can determine $V_e(z)$ for $e = e_1$.

\noindent
(2) Assume that we know $\widehat u(v,\lambda)$ for $v = p_3, p_4, p_1$. Assume also that we know $V_e(z)$ for $e = e_3, e_4, e_1$.
Then, we can determine $V_e(z)$ for $e = e_2$, and $C_v$ for $v = p_1$.
\end{lemma}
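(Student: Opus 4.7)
The plan is to extract the unknown spectral data for edges $e_1$ and $e_2$ (and the coupling constant $C_{p_1}$) directly from the two algebraic identities (\ref{Crosswithcenterr1}) and (\ref{Equation2intheSquarelattice}), and then invoke classical one-dimensional inverse spectral theory for the symmetric potential on the unit interval.

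For part (1), I would first solve (\ref{Crosswithcenterr1}) algebraically:
\[
\phi_{e_1}(1,\lambda) \;=\; -\,\frac{\phi_{e_0}(1,\lambda)\,\widehat u(p_1,\lambda)}{\widehat u(p_0,\lambda)},
\]
which expresses $\phi_{e_1}(1,\cdot)$ as a known meromorphic function (all denominators on the right are nontrivial meromorphic, so the identity holds globally). Since $\phi_{e_1}(1,\cdot)$ is entire of order $1/2$, its zeros are precisely the Dirichlet eigenvalues of $-d^2/dz^2 + V_{e_1}$ on $[0,1]$, and together with the standard asymptotic $\phi_{e_1}(1,\lambda) \sim \sinh(\sqrt{-\lambda})/\sqrt{-\lambda}$ as $\lambda \to -\infty$, the Hadamard factorization fixes the function in terms of these zeros. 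By Borg's classical inverse spectral theorem, the symmetry $V_{e_1}(z) = V_{e_1}(1-z)$ then recovers $V_{e_1}$ from its Dirichlet spectrum.

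For part (2), once $V_{e_1}$, $V_{e_3}$, $V_{e_4}$ are known, $\phi_{e_i}(1,\lambda)$ and $\phi'_{e_i}(1,\lambda)$ are known for $i = 1,3,4$. Dividing (\ref{Equation2intheSquarelattice}) by $\widehat u(p_1,\lambda)$ and moving the three known logarithmic derivatives to the left, I would obtain
\[
\frac{\phi'_{e_2}(1,\lambda)}{\phi_{e_2}(1,\lambda)} + C_{p_1} \;=\; \Phi(\lambda),
\]
where $\Phi$ is a known meromorphic function of $\lambda$. The left-hand side is the Weyl--Titchmarsh $m$-function $m_{e_2}(\lambda) := \phi'_{e_2}(1,\lambda)/\phi_{e_2}(1,\lambda)$ shifted by the real constant $C_{p_1}$. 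The standard asymptotic $m_{e_2}(\lambda) = \sqrt{-\lambda} + o(1)$ as $\lambda \to -\infty$ along the real axis lets me separate the two contributions:
\[
C_{p_1} \;=\; \lim_{\lambda \to -\infty}\bigl(\Phi(\lambda) - \sqrt{-\lambda}\bigr).
\]
This gives $C_{p_1}$, and subtracting it from $\Phi(\lambda)$ yields $m_{e_2}(\lambda)$. The Borg--Marchenko theorem (applied together with the symmetry $V_{e_2}(z)=V_{e_2}(1-z)$) then uniquely determines $V_{e_2}$.

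The main difficulty I would expect is conceptual rather than computational: it lies in invoking the right version of the one-dimensional inverse spectral result (Borg's theorem for symmetric potentials in part (1), and the $m$-function version of Borg--Marchenko in part (2)). The asymptotic separation of $m_{e_2}$ from $C_{p_1}$ and the verification that the various meromorphic denominators do not vanish identically are routine by comparison, so they should not require detailed argument beyond a citation.
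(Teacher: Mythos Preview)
Your argument is correct; part (1) is identical to the paper's, but in part (2) you take a genuinely different route. After arriving at the same identity
\[
\frac{\phi'_{e_2}(1,\lambda)}{\phi_{e_2}(1,\lambda)} + C_{p_1} = \Phi(\lambda),
\]
the paper observes that the \emph{poles} of the right-hand side are computable from the data and are unaffected by the constant $C_{p_1}$; since $\phi_{e_2}$ and $\phi'_{e_2}$ cannot vanish simultaneously, these poles are exactly the zeros of $\phi_{e_2}(1,\lambda)$, i.e.\ the Dirichlet spectrum on $e_2$. Borg's theorem for symmetric potentials then gives $V_{e_2}$, and $C_{p_1}$ is read off afterwards from the equation at a regular $\lambda$. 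You instead separate $C_{p_1}$ first via the large-$|\lambda|$ asymptotic of the $m$-function and then recover $V_{e_2}$ from the full $m$-function by Borg--Marchenko. The paper's route is more elementary---it needs only the single-spectrum Borg theorem and no asymptotic analysis of $\Phi$---and it uses the symmetry hypothesis on $V_{e_2}$ in an essential way. Your route invokes heavier machinery (asymptotics and Borg--Marchenko) but has the conceptual advantage that, once the whole $m$-function is in hand, the symmetry of $V_{e_2}$ is not actually needed for uniqueness in part (2).
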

\begin{proof}
As
$\displaystyle{
\phi_{e_1}(1,\lambda) = - \frac{\widehat u(p_1,\lambda)}{\widehat u(p_0,\lambda)}\phi_{e_0}(1,\lambda)}$, one can compute the zeros of $\phi_{e_1}(1,\lambda)$ from the given data. This means that the Dirichlet eigenvalues {referring to the potential}  $V_{e_1}(z)$ are determined by {these} data. {Indeed, by} Borg's theorem (see e.g. \cite{PoTru}, p. 55 and \cite{FrYu01}, p. 27), a symmetric potential is determined by its Dirichlet eigenvalues. {This allows to construct} $V_{e_1}(z)$ uniquely {proving thus the first claim}.

Rewrite {next relation} (\ref{Equation2intheSquarelattice}) as
\begin{equation}
\begin{split}
 \frac{\phi'_{e_2}(1,\lambda)}{\phi_{e_2}(1,\lambda)} + C_{p_1}  = \frac{1}{\phi_{e_3}(1,\lambda)}\frac{\widehat u(p_3,\lambda)}{\widehat u(p_1,\lambda)} + \frac{1}{\phi_{e_4}(1,\lambda)} \frac{\widehat u(p_4,\lambda)}{\widehat u(p_1,\lambda)}
- \sum_{i=1,3,4}\frac{\phi'_{e_i}(1,\lambda)}{\phi_{e_i}(1,\lambda)}.
\end{split}
\label{SquareequationcotainingCp1}
\end{equation}
{Both} sides are meromorphic with respect to $\lambda \in {\bf C}$ {and the} singular points of the right-hand side are determined by the given data {only}. Noting that $\phi'_e(1,\lambda) \neq 0$ if $\phi_e(1,\lambda) = 0$ ({recall that} $\phi_e(1,\lambda) = \phi'_e(1,\lambda) = 0$ {implies} $\phi_e(z,\lambda) = 0$ for all $z$), we see that the zeros of $\phi_{e_2}(1,\lambda)$ are determined {exclusively} by the given data and that the value of $C_{p_1}$ {plays no role in fixing} the zeros of $\phi_{e_2}(1,\lambda)$. Since {these} zeros are the Dirichlet eigenvalues {referring to the potential}  $V_{e_2}(z)$, Borg's theorem {implies that} $V_{e_2}(z)$ {is determined by the data uniquely}. The equation (\ref{SquareequationcotainingCp1}) then gives $C_{p_1}$ {which} proves {the second claim}.
\end{proof}

As in Figure \ref{AkAk-1Square}, we draw {the} line, {denoted there} as $c$, passing through the upper-left corner of {the lattice} $\mathcal V^o$, and call it $A_0$, {and later also} $B_{n+1}$.

We need one more {preparatory result} before stating the following lemma. Comparing Figures~\ref{ZigzagSquare} and \ref{ZigzagSquare2}, we rewrite the equation (\ref{Crosswithcenterr1}) {above} for the cross with {the} center $a_{k-1,i+1}$ {in the form}
\begin{equation}
\widehat u(a_{k,{i+1}},\lambda) = - \frac{\phi_{e'_{k,i}}(1,\lambda)}{\phi_{e_{k,i}}(1,\lambda)}\widehat u(a_{k,i},\lambda),
\nonumber
\end{equation}
and the equation (\ref{Equation2intheSquarelattice}) for the cross with center $a_{k,i+1}$ as
\begin{eqnarray}
& &\frac{1}{\phi_{e_{k+1,i}}(1,\lambda)}\widehat u(a_{k+1,i},\lambda) +
\frac{1}{\phi_{e'_{k+1,i+1}}(1,\lambda)}\widehat u(a_{k+1,i+1},\lambda)
\nonumber
\\
&=& \left(
\frac{\phi'_{e'_{k,i}}(1,\lambda)}{\phi_{e'_{k,i}}(1,\lambda)} +
\frac{\phi'_{e_{k,i+1}}(1,\lambda)}{\phi_{e_{k,i+1}}(1,\lambda)} +
\frac{\phi'_{e_{k+1,i}}(1,\lambda)}{\phi_{e_{k+1,i}}(1,\lambda)} +
\frac{\phi'_{e'_{k+1,i+1}}(1,\lambda)}{\phi_{e'_{k+1,i+1}}(1,\lambda)} + C_{a_{k,i+1}}
\right)\widehat u(a_{k,i+1},\lambda).
\nonumber
\end{eqnarray}
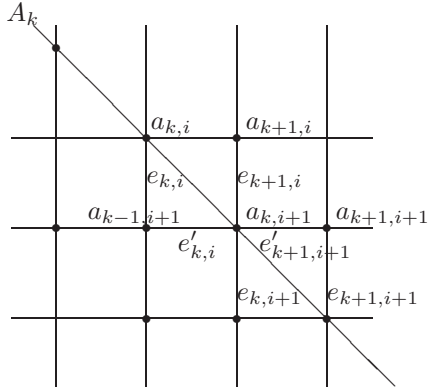
\begin{figure}[hbtp]
\begin{center}
\setlength{\unitlength}{0.6cm}
\begin{picture}(9,9)
\put(0.9,7.6){$A_k$}
\put(1,5){\line(1,0){8}}
\put(1,3){\line(1,0){8}}
\put(1,1){\line(1,0){8}}
\put(2,-0.5){\line(0,1){8}}
\put(4,-0.5){\line(0,1){8}}
\put(6,-0.5){\line(0,1){8}}
\put(8,-0.5){\line(0,1){8}}
\put(1.5,7.5){\line(1,-1){8}}

\put(2,3){\circle*{0.2}}

\put(4,3){\circle*{0.2}}
\put(8.2,3.2){$a_{k+1,i+1}$}
\put(6.2,3.2){$a_{k,i+1}$}
\put(6.2,5.2){$a_{k+1,i}$}

\put(4.7,2.5){$e'_{k,i}$}
\put(6.5, 2.5){$e'_{k+1,i+1}$}
\put(2.7,3,2){$a_{k-1,i+1}$}
\put(6,3){\circle*{0.2}}
\put(8,3){\circle*{0.2}}
\put(4.1,5.2){$a_{k,i}$}
\put(4,4){$e_{k,i}$}
\put(6,4){$e_{k+1,i}$}
\put(4,5){\circle*{0.2}}

\put(4,1){\circle*{0.2}}
\put(2,7){\circle*{0.2}}

\put(6,5){\circle*{0.2}}
\put(6,1.4){$e_{k,i+1}$}
\put(8,1.4){$e_{k+1,i+1}$}

\put(6,1){\circle*{0.2}}

\put(8,1){\circle*{0.2}}

\end{picture}
\end{center}
\caption{{Diagonal} line in the square lattice {again}}
\label{ZigzagSquare2}
\end{figure}
\begin{lemma}
\label{LemmaSquareDetermineVeCv2}
Let $1 \leq k \leq m-1$, and assume that we know $V_e(z)$ and $C_v$ for all $e$ and $v$ above {the line} $A_k$ ({not including the points} $ v\in A_k$). Then
{one is able to} determine $V_e(z)$ and $C_v$ for all $e$ between $A_k$ and $A_{k-1}$ and $C_v$ for all $v \in A_k$.
\end{lemma}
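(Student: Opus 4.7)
My plan is to combine three ingredients: the vanishing-below-$A_k$ property of the special solution from Lemma~\ref{Lemmaspecialboundarydata}, the recovery of that solution on $A_k$ and above from the D-N map together with the induction hypothesis, and the two-part reconstruction of Lemma~\ref{LemmaSquareDetermineVeCv1} applied along the zigzag between $A_k$ and $A_{k-1}$.

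First I would fix a generic $\lambda$ and, using Lemma~\ref{Lemmaspecialboundarydata}, produce the solution $\widehat u=\widehat u^{(k)}$ with Dirichlet data $\widehat f(a_k)=1$, $\widehat f=0$ on the rest of $\partial\mathcal V\setminus(\partial\mathcal V)_R$, and zero Neumann data on $(\partial\mathcal V)_L$. Part (3) of that lemma gives the Dirichlet values on $(\partial\mathcal V)_R$ and at $a_k^{\ast}$ in terms of $\Lambda_{\mathcal V}(\lambda)$, so the full Dirichlet datum is known, and part (2) gives $\widehat u^{(k)}=0$ at every vertex strictly below $A_k$.

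Second, I would compute $\widehat u^{(k)}(v,\lambda)$ at every vertex $v$ on $A_k$ or above. The induction hypothesis supplies the coefficients of (\ref{Equationwithcenterv0}) at every vertex strictly above $A_k$, as well as the straddling edges between $A_k$ and $A_{k+1}$ that were reconstructed at step $k+1$. The Dirichlet data together with the matching Neumann data furnished by $\Lambda_{\mathcal V}(\lambda)$ provide Cauchy data on the part of $\partial\mathcal V$ bordering the upper region, and one propagates this information inward via the cross equation to recover $\widehat u^{(k)}$ on the whole closed upper region as a meromorphic function of $\lambda$. The details are the subject of the forthcoming Lemma~\ref{S3AboveAkLemma}.

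Third, I would descend along the zigzag $a_{k,0}=a_k,\,a_{k-1,1},\,a_{k,1},\,a_{k-1,2},\,a_{k,2},\dots$, alternating the two parts of Lemma~\ref{LemmaSquareDetermineVeCv1}. The initial edge from $a_k$ to $a_{k-1,1}$ is adjacent to the boundary and therefore known by hypothesis. At each interior vertex $a_{k-1,j}\in A_{k-1}$ the vanishing of $\widehat u^{(k)}$ at $a_{k-1,j}$ and at its two $A_{k-2}$-neighbors collapses the cross equation to the form (\ref{Crosswithcenterr1}); combined with the previously determined down-edge at $a_{k-1,j}$ and the values $\widehat u^{(k)}(a_{k,j-1})$, $\widehat u^{(k)}(a_{k,j})$, Lemma~\ref{LemmaSquareDetermineVeCv1}(1) identifies the remaining down-edge at $a_{k-1,j}$. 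At the ensuing vertex $a_{k,j}\in A_k$ the two $A_{k-1}$-neighbors vanish, reducing the cross equation to (\ref{Equation2intheSquarelattice}), whereupon Lemma~\ref{LemmaSquareDetermineVeCv1}(2) outputs both the last unknown down-edge leaving $a_{k,j}$ (towards $a_{k-1,j+1}$) and the vertex constant $C_{a_{k,j}}$, using the two known up-edge potentials and the just-reconstructed down-edge. Iterating these two moves along the zigzag determines every edge between $A_k$ and $A_{k-1}$ and every $C_v$ for $v\in A_k$.

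The hard step is the middle one, since a priori the cross equation at $v\in A_k$ still involves unknown coefficients from the down-edges. The resolution is to work entirely inside the upper region, where every coefficient of (\ref{Equationwithcenterv0}) is already known, and to propagate the Cauchy-type data supplied by $\Lambda_{\mathcal V}(\lambda)$ on $\partial\mathcal V$ inward to $A_k$; this is the content of Lemma~\ref{S3AboveAkLemma}.
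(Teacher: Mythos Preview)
Your proposal is correct and follows essentially the same route as the paper: construct the special solution of Lemma~\ref{Lemmaspecialboundarydata}, compute its values on and above $A_k$ from the D-N map together with the induction hypothesis (the paper defers the details of this step to Lemma~\ref{S3AboveAkLemma} just as you do), and then run the zigzag induction along $a_{k,0},a_{k-1,1},a_{k,1},a_{k-1,2},\dots$, alternating parts (1) and (2) of Lemma~\ref{LemmaSquareDetermineVeCv1}, starting from the boundary-adjacent edge $e_{k,0}$. Your write-up is in fact more explicit than the paper's own two-sentence proof, which simply says ``put $i=0$, apply Lemma~\ref{LemmaSquareDetermineVeCv1}(1) at the cross centered at $a_{k-1,1}$ to get $V_{e'_{k,0}}$, then Lemma~\ref{LemmaSquareDetermineVeCv1}(2) to get $V_{e_{k,1}}$ and $C_{a_{k,1}}$, and iterate over $i$''.
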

\begin{proof}
{Observe} Figure \ref{ZigzagSquare2} and {put} $i = 0$. {Considering} the cross with {the} center $a_{k-1,1}$, {one can employ the first claim of} Lemma \ref{LemmaSquareDetermineVeCv1} {to} determine $V_e(z)$ for $e = e'_{k,0}$. {Using further the second claim}, one can determine $V_e(z)$ for $e = e_{k,1}$ and $C_v$ for $v = a_{k,1}$. This is the {first} step; {proceeding then} inductively {with} $i = 1, 2, \dots$ {we prove} obtain the lemma.
\end{proof}

We have thus determined all {the} $V_e(z)$ and $C_v$ for {the edges} $e$ and {vertices} $v$ above {the line} $A_0$ {(excluding $v\in A_0)$}. {Modifying the used argument, we can deal with the part below} the line $A_0 = B_{n+1}$.

\begin{lemma}
\label{LemmaSquareDetermineVeCv3}
Let $2 \leq \ell \leq n+1$, and assume that we know $V_e(z)$ and $C_v$ for all {the edges} $e$ and {vertices} $v$ above $B_{\ell}$, where $v \not\in  B_{\ell}$. Then we can determine $V_e(z)$ and $C_v$ for all {the} $e$ between $B_{\ell}$ and $B_{\ell-1}$ and $C_v$ for all $v \in B_{\ell}$.
\end{lemma}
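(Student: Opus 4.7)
The plan is to run the argument of Lemma~\ref{LemmaSquareDetermineVeCv2} verbatim with the diagonals $A_k$ descending from the top boundary replaced by the diagonals $B_\ell$ emanating from the left boundary. First, I would construct the analogue of the special solution of Lemma~\ref{Lemmaspecialboundarydata} anchored at $b_\ell$: by Lemma~\ref{S6partialDNdata}(2) the D-N map uniquely determines $\widehat u$ solving $(-\widehat\Delta_{\mathcal V,\lambda}+\widehat Q_{\mathcal V,\lambda})\widehat u=0$ with partial Dirichlet data $\widehat f(b_\ell)=1$, $\widehat f\equiv 0$ on $\partial\mathcal V\setminus((\partial\mathcal V)_R\cup\{b_\ell\})$, and partial Neumann data $\widehat g\equiv 0$ on $(\partial\mathcal V)_L$. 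At each $b_j\in(\partial\mathcal V)_L$ with $j\neq\ell$ the vanishing Dirichlet and Neumann data force $\widehat u$ to vanish at the unique interior neighbour of $b_j$, and this vanishing propagates through the cross equations exactly as in Lemma~\ref{Lemmaspecialboundarydata}(2) to give $\widehat u(v,\lambda)=0$ at every vertex $v$ strictly below the line $B_\ell$.

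Next, using the combined inductive hypothesis of Lemma~\ref{LemmaSquareDetermineVeCv2} and of Lemma~\ref{LemmaSquareDetermineVeCv3} itself at higher values of $\ell$, the quantities $V_e$ and $C_v$ are known on every edge and vertex above $B_\ell$, so (\ref{Equationwithcenterv0}) together with the D-N map lets me compute $\widehat u(v,\lambda)$ as a meromorphic function of $\lambda$ at every vertex on or above $B_\ell$. I then march from $b_\ell$ along $B_\ell$ toward $b_\ell^{\ast}$ using the zigzag procedure of Figure~\ref{ZigzagSquare2}: at each step I first apply claim~(1) of Lemma~\ref{LemmaSquareDetermineVeCv1} to the cross centred at the vertex of $B_{\ell-1}$ immediately adjacent to the current $B_\ell$-vertex $v$ (two of the four cross terms drop out since $\widehat u$ vanishes strictly below $B_\ell$), which, via Borg's theorem, recovers the potential on one of the two edges descending from $v$ to $B_{\ell-1}$. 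Then claim~(2) applied to the cross centred at $v$ itself delivers the potential on the remaining descending edge and the coupling constant $C_v$. Iterating along $B_\ell$ yields all the required $V_e$ between $B_\ell$ and $B_{\ell-1}$ and all $C_v$ on $B_\ell$.

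The main delicate point will be the handling of the two endpoints of $B_\ell$: the vertex $b_\ell$ itself, which sits on $\partial\mathcal V$ with degree one, and the vertex $b_\ell^{\ast}$, which lies on either $(\partial\mathcal V)_B$ or $(\partial\mathcal V)_R$. At these points the cross diagram is truncated and the clean four-term identity~(\ref{Equation2intheSquarelattice}) degenerates. I will need to verify, as in the $A_k$ case, that the truncated identity still isolates the zeros of the unknown $\phi_e(1,\lambda)$, invoking the non-degeneracy implication $\phi_e(1,\lambda)=\phi'_e(1,\lambda)=0\Rightarrow \phi_e\equiv 0$ to separate true zeros from poles of the ratio; after that, Borg's theorem together with the coupling-constant reconstruction from~(\ref{SquareequationcotainingCp1}) should proceed unchanged, and iterating over $\ell=n+1,n,\dots,2$ will exhaust the region below $A_0=B_{n+1}$.
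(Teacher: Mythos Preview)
Your overall plan---build a special solution vanishing strictly below $B_\ell$, compute its values on and above $B_\ell$ from the D-N map and the inductive data, then run the two-step zigzag of Lemma~\ref{LemmaSquareDetermineVeCv1} along $B_\ell$---is exactly the paper's strategy. But your specific choice of boundary data does not produce such a solution.

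The problem is that you keep Lemma~\ref{S6partialDNdata} in its original form, with Neumann data prescribed on \emph{all} of $(\partial\mathcal V)_L$, and you set $\widehat g\equiv 0$ there. Look at the vertex $b_{\ell-1}\in(\partial\mathcal V)_L$: its unique interior neighbour is $(1,\ell-1)$, which is precisely the first interior vertex on $B_\ell$. The condition $\widehat g(b_{\ell-1})=0$ forces $\widehat u\big((1,\ell-1)\big)=0$. Now run the cross equation at $(1,\ell-2)$ (a vertex strictly below $B_\ell$, so $\widehat u$ vanishes there): its four neighbours are $b_{\ell-2}$, $(1,\ell-3)$, $(1,\ell-1)$ and $(2,\ell-2)$, and the first three all carry $\widehat u=0$, whence $\widehat u\big((2,\ell-2)\big)=0$. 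Iterating, $\widehat u$ vanishes at \emph{every} interior vertex of $B_\ell$, not just strictly below it. With $\widehat u\equiv 0$ on the interior of $B_\ell$, equations~(\ref{Crosswithcenterr1}) and~(\ref{Equation2intheSquarelattice}) become $0=0$ and give no information; the zigzag never starts. (In fact your $\widehat u$ first becomes nonzero only on $B_{\ell+2}$, so it is the right solution for a different induction step.)

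The remedy is to break the left-right/top-bottom symmetry the other way: use the reflected version of Lemma~\ref{S6partialDNdata} with the free Dirichlet side on $(\partial\mathcal V)_B$ and the Neumann data on $(\partial\mathcal V)_T$, so that no Neumann condition is imposed at the $b_j$'s; equivalently, keep the original Lemma~\ref{S6partialDNdata} but allow the Neumann datum at $b_{\ell-1}$ to be nonzero, which is what the paper (modulo some apparent misprints in its displayed formulae) does when it anchors the Dirichlet datum at $b_n$ for the case $\ell=n+1$ and declares that $\widehat u=0$ holds below $B_{n+1}$ \emph{except at $b_n$}. With either fix the first interior vertex of $B_\ell$ carries a nonzero value and your zigzag argument goes through unchanged.
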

\begin{proof}
{Consider} the case $\ell = n+1$ {and denote} the points in $\mathcal V\cap B_{n+1}$  by
$$
c, \ c_1, \ \dots,\  c^{\ast}.
$$
{We use} Lemma \ref{Lemmaspecialboundarydata} {taking} the partial  Dirichlet data as
\begin{equation}
 \left\{
 \begin{split}
 & \widehat f(b_n) = 1, \\
 & \widehat f(v) = 0, \quad v \in \partial\mathcal V \setminus ((\partial\mathcal V)_R \cup \{c, c^{\ast}\}),
 \end{split}
 \right.
 \end{equation}
and the partial Neumann data as
 \begin{equation}
 \left\{
 \begin{split}
 & \widehat g(b_n) = 0, \\
 & \widehat g(v) = 0, \quad v \in \partial\mathcal V \setminus ((\partial\mathcal V)_L \cup \{b_n\}),
 \end{split}
 \right.
 \end{equation}
{to which the solution} $\widehat u$ constructed {there corresponds}. Then $\widehat u = 0$ {holds} below $B_{n+1}$ except {at} $b_n$, {in} fact, this is true for the vertices adjacent to $(\partial\mathcal V)_L$ {in view of} the Neumann data. Inspecting the equation, the same is true on the next line. {Arguing then as} in Lemma \ref{LemmaSquareDetermineVeCv2} {we} obtain {the claim}.
\end{proof}

We have thus proven the following theorem.
\begin{theorem}
\label{TheoremTBVPspecialSquare}
Let $\Omega$ be a rectangular domain as in Figure \ref{SquareIntDomain}. {From the knowledge of} the D-N map of $- \widehat{\Delta}_{\mathcal V,\lambda} + \widehat Q_{\mathcal V,\lambda}$ for {any} energy $\lambda$, one can {then} determine all {the} $V_e(z)$ and $C_v$, provided we know $V_e(z)$ for all {the edges} $e$ adjacent to the boundary of $\mathcal V$.
\end{theorem}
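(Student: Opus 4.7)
The plan is to derive the theorem by chaining together the inductive lemmas already established, sweeping the unknown potentials and coupling constants line by line from the top-right corner down to the bottom-left corner of the rectangle. Concretely, I first traverse the anti-diagonal lines $A_k$ for $k$ decreasing from $m$ down to $1$ using Lemma \ref{LemmaSquareDetermineVeCv2}, and then I continue through the lines $B_\ell$ for $\ell$ from $n+1$ down to $2$ using Lemma \ref{LemmaSquareDetermineVeCv3}. Since every interior edge of $\mathcal{E}$ lies either strictly between some $A_k$ and $A_{k-1}$ with $1\le k\le m$ or between some $B_\ell$ and $B_{\ell-1}$ with $2\le \ell\le n+1$, and every interior vertex lies on some $A_k$ or $B_\ell$, this sweep exhausts the whole lattice.

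The base case is the standing hypothesis: $V_e(z)$ is known on every edge $e$ adjacent to $\partial\mathcal V$. In particular, the edges lying strictly above the line $A_{m-1}$ are boundary edges emanating from $a_m$, so their potentials are known, which is precisely the hypothesis needed to start Lemma \ref{LemmaSquareDetermineVeCv2} at $k=m-1$. For the inductive step, knowing $V_e$ and $C_v$ above $A_k$ plus knowing the D-N map lets us compute (via the partial data Lemma \ref{S6partialDNdata} and the vertex equation \eqref{Equationwithcenterv0}) the solutions $\widehat u(v,\lambda)$ at the vertices above and on $A_k$ that appear in Lemma \ref{LemmaSquareDetermineVeCv1}; this reconstructs $V_e(z)$ on all edges between $A_k$ and $A_{k-1}$ and $C_v$ at all $v\in A_k$. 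Iterating gives the complete reconstruction above $A_0=B_{n+1}$.

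The transition to the $B_\ell$ sweep is handled by Lemma \ref{LemmaSquareDetermineVeCv3}, whose role mirrors that of Lemma \ref{LemmaSquareDetermineVeCv2} after the main diagonal is crossed, with the special test function constructed so that $\widehat u$ vanishes below the current $B_\ell$ line except at one designated boundary vertex. After finitely many applications, $V_e(z)$ and $C_v$ are determined on every edge and every interior vertex of $\mathcal V$.

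The only place that needs care is the bookkeeping at each step: one has to verify that the data Lemma \ref{LemmaSquareDetermineVeCv1} asks for, namely $\widehat u(v,\lambda)$ at the four vertices of a cross and $V_e(z)$ on the three auxiliary edges, is genuinely available from the current induction hypothesis combined with the D-N map. This reduces to observing that the cross used at each step has its already-known vertices and edges strictly above the active line, while the unknown edge points downward into the next unknown strip; this geometric fact is exactly what is encoded in Figure~\ref{ZigzagSquare2}, and it is what makes the induction close. With this verification, the proof is complete.
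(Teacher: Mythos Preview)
Your proposal is correct and follows essentially the same route as the paper: the theorem is obtained by iterating Lemma~\ref{LemmaSquareDetermineVeCv2} down the family of lines $A_k$ (with the boundary-edge hypothesis furnishing the base case above $A_{m-1}$) and then iterating Lemma~\ref{LemmaSquareDetermineVeCv3} down the lines $B_\ell$, exactly as the paper does. The only quibble is a harmless indexing slip---you first write ``$k$ decreasing from $m$ down to $1$'' but then (correctly) say the induction starts at $k=m-1$, which is the range in Lemma~\ref{LemmaSquareDetermineVeCv2}.
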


Theorems \ref{TheoremIBVP} and \ref{TheoremSquareISP} for the square lattice then follow from Theorem~\ref{TheoremTBVPspecialSquare}.

\section{Hexagonal lattice}
\label{SectionHexagonal}

\subsection{Hexagonal parallelogram}
Let us next consider the hexagonal lattice. The first three steps {of the construction} are parallel to the {square} case {discussed above}.
We identify $\mathbb{R}^2$ with $\mathbb{C}$, and put
\begin{equation}
\omega = e^{\pi i/3}, \quad
{\bf v}_1 = 1 + \omega, \quad {\bf v}_2 = \sqrt3 i,
\nonumber
\end{equation}
\begin{equation}
p_1 = \omega^{-1} = \omega^5, \quad p_2 = 1,
\nonumber
\end{equation}
{Given} $n = n_1 + i n_2 \in \mathbb{Z}[i]= \mathbb{Z} + i\mathbb{Z}$, {we denote}
$$
\mathcal L_0 = \left\{{\bf v}(n)\, ; \, n \in \mathbb{Z}[i]\right\}, \quad
{\bf v}(n) = n_1{\bf v}_1 + n_2{\bf v}_2,
$$
and define the vertex set $\mathcal V_0$ by
$$
\mathcal V_0 = \mathcal V_{01} \cup \mathcal V_{02}, \quad \mathcal V_{0i} = p_i + \mathcal L_0.
$$
Let $\mathcal D_0$ be the Wigner-Seitz cell of $\mathcal V_0$. It is a hexagon the six vertices of which are at the points $\omega^k,\ 0 \leq k \leq 5$, and the center at the origin. Take the set $D_N := \{n \in \mathbb{Z}[i]\, ; \, 0 \leq n_1 \leq N, \  0 \leq n_2 \leq N\}$, and put
 $$
 \mathcal D_N = {\bigcup}_{n \in D_N}\Big( \mathcal D_0 + {\bf v}(n)\Big);
 $$
the number $N$ {we have to choose} large enough. This is a parallelogram in the hexagonal lattice as sketched in Figure \ref{S6HexaParallel}.
\begin{figure}[h]
\includegraphics[width=7cm,clip]{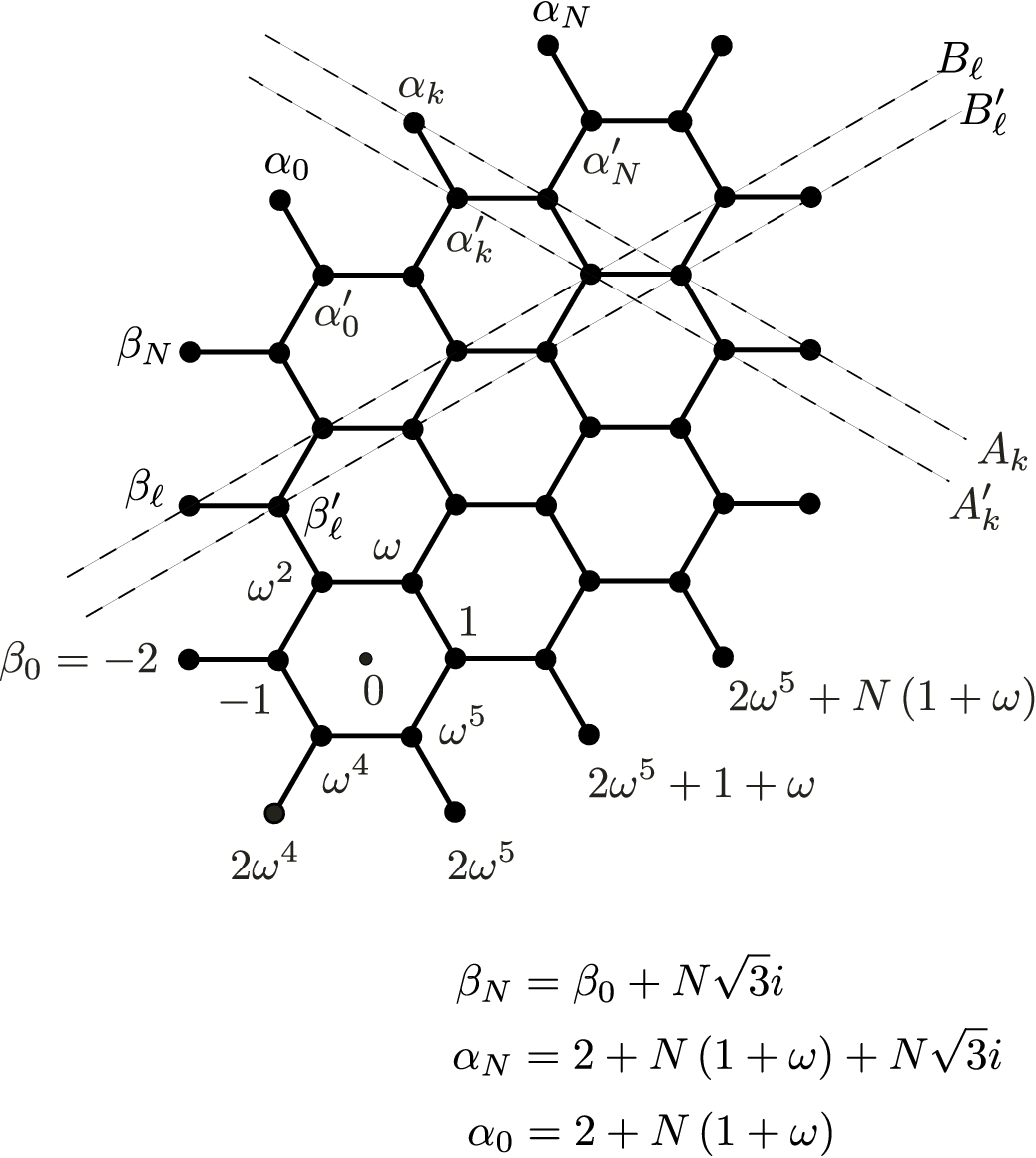}
\caption{Hexagonal parallelogram ($N = 2$)}
\label{S6HexaParallel}
\end{figure}
The piecewise linear loop which is the perimeter of $\mathcal D_N$ has interior angles switching between the values $2\pi/3$ and $4\pi/3$. Let $\mathcal A$ be the set of vertices with the former {angle}, and to each $z \in \mathcal A$ we attach a new edge $ e_{z,\zeta}$ with the new vertex $\zeta = t(e_{z,\zeta})$ at its terminal point, naturally laying outside of $\mathcal D_N$. Let
$$
\Omega = \{v \in \mathcal V_0\, ; \, v \in \mathcal D_N\}
$$
be the set of vertices in the interior of the resulting discrete graph; its boundary consists of the added vertices, $\partial\mathcal V = \{t(e_{z,\zeta})\, ; \, z \in \mathcal A\}$, and {one is able to divide it} into four parts, naturally labeled as top, bottom, right, and left:
\begin{gather*}
\begin{split}
 (\partial\mathcal V)_T =& \{ \alpha_0 , \cdots , \alpha_N \}, \\
 (\partial\mathcal V)_B = & \{2\omega^5 + k(1 + \omega)\,  ; \, 0 \leq k \leq N\}, \\
 (\partial\mathcal V)_R = & \{ 2+ N(1+\omega ) + k\sqrt{3} i \, ; \, 1 \leq k \leq N \} \cup \{ 2+N(1+\omega ) +N\sqrt{3} i + 2 \omega ^2  \} , \\
 (\partial\mathcal V)_L =& \{2\omega^4\}\cup\{\beta_0,\cdots,\beta_N\} ,
\end{split}
\end{gather*}
where $ \alpha_k = \beta _N + 2\omega + k (1+\omega ) $ and $ \beta _k = -2 + k\sqrt{3} i $ for $ 0\leq k \leq N$. {The} perturbations we {are going to} consider are {again} supposed to be of compact support, {hence} taking $N$ large enough {one} can ensure that {they are supported within} $\mathcal D_N$.

For $0 \leq k \leq N$, consider the line $A_k$ as sketched in Figures \ref{S6HexaParallel} and \ref{LineAk}:
\begin{figure}[h]
\centering
\includegraphics[width=6.5cm]{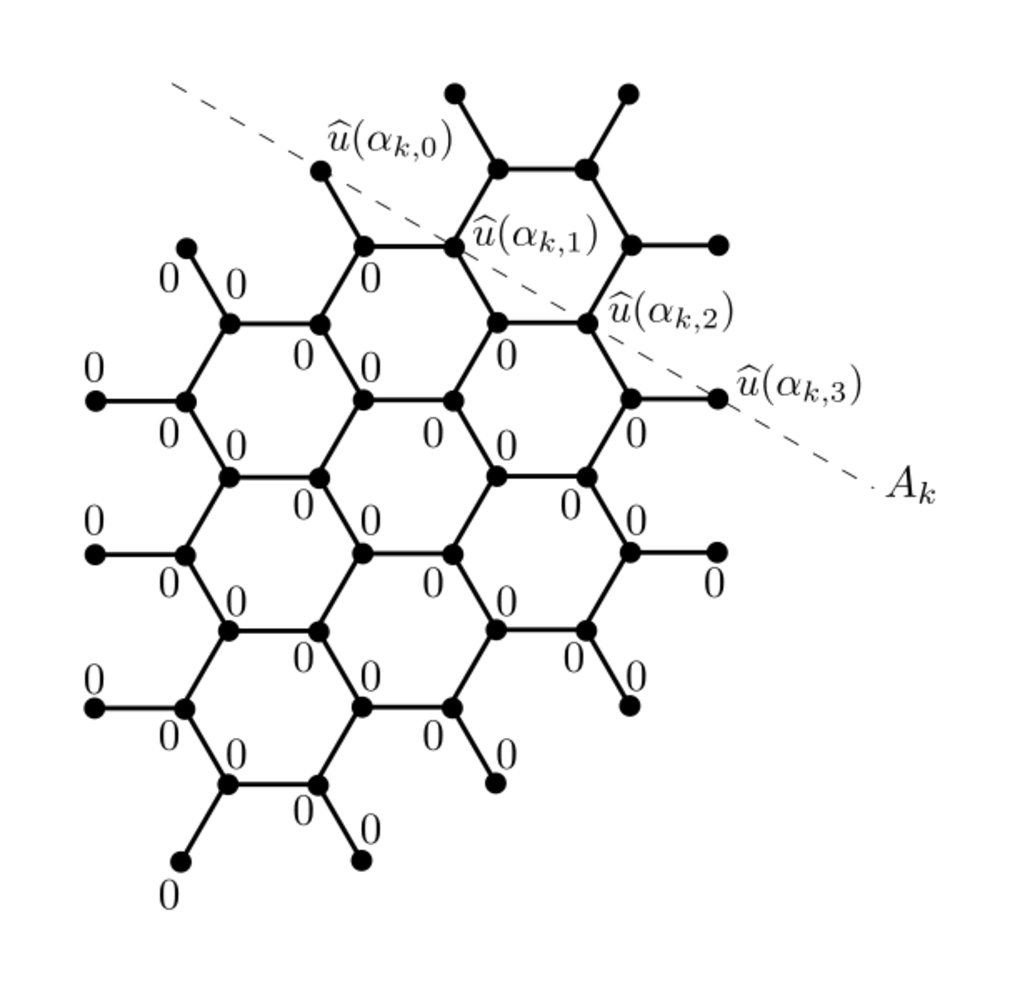}
\caption{{The line} $A_k$}
\label{LineAk}
\end{figure}
\begin{equation}
A_k = \{x_1 + ix_2\, ; \, x_1 + \sqrt3x_2 = a_k\},
\nonumber
\end{equation}
where $a_k$ is chosen so that $A_k$ passes through
\begin{equation}
\alpha_k =  \alpha_0 + k (1+ \omega )  \in
(\partial\mathcal V)_T.
\nonumber
\end{equation}
The vertices {belonging to} $A_k\cap\Omega$ are written as
\begin{equation}
\alpha_{k,\ell} = \alpha_k + \ell (1 + \omega^5 ), \quad \ell = 0, 1, 2, \cdots.
\nonumber
\end{equation}

\subsection{Special solutions to the vertex Schr{\"o}dinger equation}
We consider the following Dirichlet problem for the discrete Schr{\"o}dinger equation
\begin{equation}
\left\{
\begin{split}
& ( - \widehat{\Delta}_{\mathcal V,\lambda} + \widehat Q_{\mathcal V,\lambda})\widehat u = 0 \quad
{\rm in} \quad \mathcal V^{o}, \\
& \widehat u = \widehat f \quad {\rm on} \quad \partial \mathcal V.
\end{split}
\right.
\end{equation}
As in the case of the square lattice, the following {lemmata} hold.


\begin{lemma}\label{S6partialDNdata2}
(1) Given partial Dirichlet data $\widehat f$ on $\partial\mathcal V\setminus(\partial \mathcal V)_R$, and partial Neumann data $\widehat g$ on $(\partial\mathcal V)_L$, there is a unique solution $\widehat u$ on $\mathcal V$ to the equation
\begin{equation}
\left\{
\begin{split}
& (- \widehat\Delta_{\mathcal V,\lambda} + \widehat Q_{\mathcal V,\lambda})\widehat u = 0, \quad {\it in} \quad \mathcal V^{o},\\
& \widehat u =\widehat f, \quad {\it on} \quad \partial\mathcal V\setminus(\partial\mathcal V)_R, \\
& \partial_{\nu}\widehat u = \widehat g, \quad {\it on} \quad (\partial\mathcal V)_L.
\end{split}
\right.
\label{Lemma61Equation}
\end{equation}
\noindent
(2) Given the D-N map $\Lambda_{\mathcal V}(\lambda)$, partial Dirichlet data $\widehat f_2$ on $\partial\mathcal V\setminus(\partial\mathcal V)_R$ and partial Neumann data $\widehat g$ on $(\partial\mathcal V)_{L}$, there exists a unique $\widehat f$ on $\partial\mathcal V$ such that $\widehat f = \widehat f_2$ on $\partial\mathcal V\setminus(\partial\mathcal V)_R$ and $\Lambda_{\mathcal V}(\lambda)\widehat f = \widehat g$ on $(\partial\mathcal V)_{L}$. Moreover, $\widehat f$ is uniquely determined by the D-N map.
\end{lemma}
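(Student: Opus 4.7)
The plan is to mirror the proof of Lemma~\ref{S6partialDNdata} from the square case, which in turn follows \cite[Lemma 6.1]{AIM18}, with the geometry adjusted to the hexagonal lattice. Part (2) will come out as a short corollary of part (1), so the essential work lies in constructing the solution to the mixed boundary value problem.

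For part (1), I would construct $\widehat u$ by a left-to-right sweep along the diagonals $A_k$ introduced above. The Dirichlet data $\widehat f$ determine $\widehat u$ on $\partial\mathcal V\setminus(\partial\mathcal V)_R$. Because $d_v=1$ for every $v\in(\partial\mathcal V)_L$, each such $v$ has a unique interior neighbor $w$ joined by one edge $e$, and the Neumann condition
\[
\widehat g(v) = -\,\frac{1}{\phi_{e0}(w,\lambda)}\widehat u(w)
\]
determines $\widehat u(w)$. Thereafter, at each step I locate a vertex $v_0 \in \mathcal V^o$ such that $\widehat u(v_0)$ and $\widehat u$ at two of its three neighbors are already known, and apply the vertex Schr\"odinger equation at $v_0$,
\[
\sum_{v_i\sim v_0}\frac{1}{\phi_{e_i}(1,\lambda)}\widehat u(v_i) = \Big(\sum_{v_i\sim v_0}\frac{\phi'_{e_i}(1,\lambda)}{\phi_{e_i}(1,\lambda)} + C_{v_0}\Big)\widehat u(v_0),
\]
to solve for $\widehat u$ at the third, previously unknown, neighbor. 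The hexagonal geometry admits an ordering of $\mathcal V^o$ for which each new value is produced in exactly one such step, and the sweep terminates upon reaching $(\partial\mathcal V)_R$. Uniqueness is immediate: if $\widehat f=0$ and $\widehat g=0$, then every step produces $0$, hence $\widehat u\equiv 0$.

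For part (2), take the $\widehat u$ produced by part (1) from the data $\widehat f_2$ and $\widehat g$, and define $\widehat f$ on $\partial\mathcal V$ by $\widehat f=\widehat f_2$ on $\partial\mathcal V\setminus(\partial\mathcal V)_R$ and $\widehat f=\widehat u|_{(\partial\mathcal V)_R}$. Then $\widehat u$ solves the Dirichlet problem with this boundary data, so by construction $\Lambda_{\mathcal V}(\lambda)\widehat f=\widehat g$ on $(\partial\mathcal V)_L$. If $\widehat f'$ were another such extension, the corresponding solution $\widehat u'$ would satisfy the mixed problem of part (1) with the same data, so $\widehat u=\widehat u'$ by uniqueness and hence $\widehat f=\widehat f'$. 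The determination of $\widehat f$ from the D-N map is then built into the construction.

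The main obstacle is verifying that the propagation in part (1) advances exactly one vertex at a time and exhausts all of $\mathcal V^o$. The hexagonal vertex set splits into two sublattices $\mathcal V_{01}, \mathcal V_{02}$ of opposite local orientation, so one has to track the front of known values carefully and check that, at each step, the chosen $v_0$ indeed has exactly one not-yet-known neighbor. A secondary nuisance is that the coefficients $\phi_{e_i}(1,\lambda)$ may vanish at exceptional $\lambda$; this is circumvented by performing the sweep at generic $\lambda$ and extending the resulting identities meromorphically in $\lambda$, in the spirit of Lemma~\ref{LemmaSquareDetermineVeCv1}.
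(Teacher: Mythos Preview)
Your proposal is correct and follows essentially the same approach as the paper. The paper does not give an explicit proof of Lemma~\ref{S6partialDNdata2}; it simply states ``As in the case of the square lattice, the following lemmata hold,'' referring back to Lemma~\ref{S6partialDNdata}, which in turn is cited from \cite[Lemma~6.1]{AIM18}. The only minor difference is that where you propose to sweep along the diagonals $A_k$, the paper (in the related proof of Lemma~\ref{S3AboveAkLemma}) organizes the same propagation along the vertical lines $x_1=\mathrm{const}$, starting from the leftmost column and moving right; both orderings achieve the required ``one unknown neighbor at a time'' structure in the hexagonal geometry.
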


\begin{lemma}
\label{LemmaspecialboundarydataHexa2}
Let $A_k \cap\partial\mathcal V = \{\alpha_{k,0},\alpha_{k,m}\}$. \\
\noindent
(1) There exists a unique solution $\widehat u$ to the equation
\begin{equation}
\big(- \widehat\Delta_{\mathcal V,\lambda} + \widehat Q_{\mathcal V,\lambda}\big)\widehat u = 0 \quad {\it in} \quad \mathcal V^{o},
\label{S7BVP}
\end{equation}
with the partial Dirichlet data $\widehat f$ such that
\begin{equation}
\left\{
\begin{split}
& \widehat f(\alpha_{k,0}) = 1, \\
& \widehat f(z) = 0 \quad {\it for} \quad
z \in \partial\mathcal V\setminus\big((\partial\mathcal V)_R
\cup\alpha_{k,0}\cup
\big)
\end{split}
\right.
\nonumber
\end{equation}
\HOX{As in Lemma 2.2, $\alpha_{k,m}$ is removed.}
and the partial Neumann data $\widehat g = 0$ on $(\partial\mathcal V)_L$. \\
\noindent
(2) \pe{The said solution} satisfies
\begin{equation}
\widehat u(x_1 + ix_2) = 0 \quad {\it if} \quad
x_1 + \sqrt3 x_2 < a_k.
\nonumber
\end{equation}
(3)
Moreover, the values of $\widehat f$ on $(\partial \mathcal V)_R$ and $\alpha_{k,m}$ are uniquely determined by the D-N map.
\end{lemma}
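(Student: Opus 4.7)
The plan is to mirror the proof of Lemma \ref{Lemmaspecialboundarydata} for the square case (itself patterned on \cite[Lemma~6.2]{AIM18}), adapting the bookkeeping to the hexagonal geometry. Claims (1) and (3) will fall out of Lemma \ref{S6partialDNdata2}: part (1) is an instance of Lemma \ref{S6partialDNdata2}(1) with $\widehat g=0$ on $(\partial\mathcal V)_L$ and the prescribed Dirichlet data on $\partial\mathcal V\setminus(\partial\mathcal V)_R$, while part (3) invokes Lemma \ref{S6partialDNdata2}(2) to recover the full Dirichlet function $\widehat f$ on $\partial\mathcal V$ from $\Lambda_{\mathcal V}(\lambda)$ together with the same partial data; this yields in particular the unknown values on $(\partial\mathcal V)_R$ and at $\alpha_{k,m}\in A_k\cap(\partial\mathcal V)_R$.

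The substance lies in claim (2), which I will handle by zero-propagation. Every $z\in(\partial\mathcal V)_L$ sits at the end of a pendant edge, so it has a single interior neighbor $w_z\in\mathcal V^o$. Since $\widehat f(z)=0$ and $\partial_\nu \widehat u(z)=0$ there, the defining formula for $\partial_\nu$ forces $\widehat u(w_z)/\phi_{e0}(w_z,\lambda)=0$, hence $\widehat u(w_z)=0$; this supplies vanishing of $\widehat u$ on the whole first interior layer adjacent to $(\partial\mathcal V)_L$. Then, at any interior degree-three vertex $v$ with neighbors $w_1,w_2,w_3$ joined to $v$ by edges $e_1,e_2,e_3$, the vertex Schr\"odinger equation
\begin{equation*}
\sum_{i=1}^{3}\frac{1}{\phi_{e_i}(1,\lambda)}\widehat u(w_i)=\Big(\sum_{i=1}^{3}\frac{\phi'_{e_i}(1,\lambda)}{\phi_{e_i}(1,\lambda)}+C_v\Big)\widehat u(v)
\end{equation*}
shows that if $\widehat u(v)$ and two of the $\widehat u(w_i)$ already vanish, so does the third (since $1/\phi_{e_i}(1,\lambda)$ is not identically zero in $\lambda$). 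Sweeping this rule through the hexagonal cells from left to right below the line $A_k$, and using the vanishing on $(\partial\mathcal V)_B$ as an additional starting set, one forces $\widehat u=0$ at every vertex with $x_1+\sqrt{3}\,x_2<a_k$.

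A more compact alternative is to construct a candidate $\widetilde u$ directly: set $\widetilde u\equiv 0$ at every vertex strictly below $A_k$, and extend $\widetilde u$ to the region on and above $A_k$ by solving the vertex Schr\"odinger equation with the prescribed Dirichlet data on $\partial\mathcal V\setminus(\partial\mathcal V)_R$ and zero matching at the vertices on $A_k\cap\mathcal V^o$. One then verifies that $\partial_\nu\widetilde u=0$ on $(\partial\mathcal V)_L$, which is automatic because each left-boundary vertex has its unique interior neighbor strictly below $A_k$. Uniqueness from part (1) identifies $\widetilde u$ with $\widehat u$ and delivers claim (2).

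The principal obstacle is the geometric check underlying either variant: the hexagonal lattice splits into the two sublattices $\mathcal V_{01}$ and $\mathcal V_{02}$, and the line $A_k$ meets them in the zigzag pattern sketched in Figure \ref{LineAk}, so one has to verify step-by-step that the dependency chain issuing from $(\partial\mathcal V)_L\cup(\partial\mathcal V)_B$ does reach every vertex with $x_1+\sqrt{3}\,x_2<a_k$ without ever invoking an equation whose center lies on or above $A_k$. This is a finite, explicit combinatorial verification using the parametrization $\alpha_{k,\ell}=\alpha_k+\ell(1+\omega^5)$ together with the explicit decomposition of $\partial\mathcal V$ into top, bottom, right and left parts.
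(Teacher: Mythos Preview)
Your proposal is correct and follows essentially the same approach as the paper. The paper does not give a separate proof of this lemma; it simply records that the argument is identical to that of Lemma~\ref{Lemmaspecialboundarydata} for the square lattice (which in turn defers to \cite[Lemma~6.2]{AIM18}), and the detailed column-by-column zero-propagation you describe is spelled out later in the proof of Lemma~\ref{S3AboveAkLemma}. Your invocation of Lemma~\ref{S6partialDNdata2} for parts (1) and (3), and the degree-three propagation rule for part (2), match this exactly; the only cosmetic difference is that the paper sweeps along vertical lines $x_1=\mathrm{const}$ starting from $(\partial\mathcal V)_L$, whereas you phrase it as a left-to-right sweep through hexagonal cells, but the underlying mechanism is the same.
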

\noindent

\begin{lemma}
\label{S3AboveAkLemma}
\noindent
The values of $\widehat u$ on $A_k$ can be computed by {only using} the D-N map, the edge potentials $V_e(z)$ for the edges $e$ in the halfplane $x_1 + \sqrt{3}x_2 \geq a_k$, and the vertex potentials $C_v$ for vertices $v$ in the halfplane $x_1 + \sqrt{3}x_2 > a_k$.
\end{lemma}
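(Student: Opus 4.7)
The plan is to first recover the complete boundary values of $\widehat u$ from the D-N map and then propagate these values through the slab $\{x_1+\sqrt{3}x_2\geq a_k\}$ using a top-down sweep that invokes only the data permitted by the lemma.

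To begin, Lemma \ref{LemmaspecialboundarydataHexa2}(3) tells us that the D-N map together with the prescribed partial Dirichlet data determines $\widehat f$ on $(\partial\mathcal V)_R$ and at $\alpha_{k,m}$, and hence on all of $\partial\mathcal V$. Applying $\Lambda_{\mathcal V}(\lambda)$ to this full Dirichlet datum gives the Neumann data on $\partial\mathcal V$. By assumption (ii), $d_v=1$ for $v\in\partial\mathcal V$, so the defining formula for the Neumann trace simplifies to $(\partial_{\nu}\widehat u)(v)=-\phi_{e0}(w,\lambda)^{-1}\widehat u(w)$, where $w$ is the unique interior neighbour of $v$. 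This determines $\widehat u(w)$ at every interior vertex $w$ adjacent to $\partial\mathcal V$. Note that for boundary vertices below $A_k$ both sides automatically vanish by Lemma \ref{LemmaspecialboundarydataHexa2}(2), so the only edge potentials actually invoked in this step are those on edges with an endpoint on or above $A_k$, which lie in the closed upper halfplane.

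Next I would run a top-down sweep: enumerate the interior vertices $v$ with $x_1(v)+\sqrt{3}\,x_2(v)\geq a_k$ by decreasing value of this projection, and at each vertex $v_0$ strictly above $A_k$ invoke the vertex Schr\"odinger equation
\[
\sum_{i=1}^{3}\frac{1}{\phi_{e_i}(1,\lambda)}\widehat u(v_i)=\Bigl(\sum_{i=1}^{3}\frac{\phi'_{e_i}(1,\lambda)}{\phi_{e_i}(1,\lambda)}+C_{v_0}\Bigr)\widehat u(v_0).
\]
All three edges incident to $v_0$ lie in the closed upper halfplane, so the coefficients of this equation fall among the allowed data. Among the three neighbours, those strictly below $A_k$ contribute zero by Lemma \ref{LemmaspecialboundarydataHexa2}(2), those on $\partial\mathcal V$ were computed in the previous step, and those higher in the sweep ordering are already determined, so the equation can be solved for the not-yet-known neighbour value.

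The main obstacle is to arrange the sweep so that each equation really does isolate a single new unknown. Because every honeycomb vertex has degree three, and one sublattice has two nearest neighbours with larger projection and one with smaller while the other sublattice has the opposite, the ordering must interleave the two sublattices along each line parallel to $A_k$; this is the honeycomb analogue of the cross-by-cross bookkeeping used in the proof of Lemma \ref{LemmaSquareDetermineVeCv2} for the square lattice, and I expect it to be the most delicate point. Equivalently, one can package the whole procedure as a finite square linear system for the unknown values of $\widehat u$ in the slab, whose unique solvability is automatic from the existence and uniqueness statement of Lemma \ref{LemmaspecialboundarydataHexa2}(1). In either formulation the sweep terminates upon reaching $A_k\cap\Omega$, and $\widehat u$ on $A_k$ is recovered using only the D-N map, the edge potentials in $\{x_1+\sqrt{3}x_2\geq a_k\}$ and the vertex potentials in $\{x_1+\sqrt{3}x_2>a_k\}$, as required.
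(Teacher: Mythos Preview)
Your overall plan---recover the full Dirichlet data via Lemma~\ref{LemmaspecialboundarydataHexa2}(3), read off the first interior layer from the Neumann trace, then propagate inward with the vertex equation---is the right one, and it matches the paper in spirit. The gap is exactly where you flag it: the sweep ordered by decreasing $x_1+\sqrt3\,x_2$ does \emph{not} isolate a single unknown at the vertices of the ``one-up, two-down'' sublattice. For such a vertex $v_0$ strictly above $A_k$, both lower neighbours lie on the next line down and are simultaneously unknown, so the equation at $v_0$ gives one relation in two unknowns. ``Interleaving the two sublattices along each line parallel to $A_k$'' does not fix this by itself; what is needed is a lateral anchor so that one of the two lower neighbours is already known (e.g.\ from the boundary), after which the equations along the line can be solved in a chain. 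Your fallback---packaging everything as a square linear system and invoking Lemma~\ref{LemmaspecialboundarydataHexa2}(1) for unique solvability---does not work as written: that lemma concerns the full boundary value problem using the equation at \emph{all} interior vertices, whereas your system uses only the equations at vertices strictly above $A_k$; uniqueness of the former does not imply determinacy of the latter.

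The paper closes this gap by choosing a different sweep direction. Rather than descending along lines parallel to $A_k$, it treats the equation as a Cauchy problem with data on the top and left sides: a left-to-right pass by vertical lines $x_1=\mathrm{const}$ establishes $\widehat u=0$ below $A_k$, and then a top-down pass from $(\partial\mathcal V)_T$, proceeding along each successive row \emph{from the rightmost vertex to the left}, computes $\widehat u$ on and above $A_k$. The rightmost value on each row is supplied by the D--N map on $(\partial\mathcal V)_R$ (via part~(3) of the lemma), and this lateral anchoring is precisely what guarantees one new unknown per step. If you want to keep your $A_k$-parallel sweep, you must supply the analogous anchor explicitly; otherwise the argument is incomplete.
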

\begin{proof}
The claim is obtained using equation (\ref{S7BVP}) and the D-N map. Specifically, we regard the said equation as the Cauchy problem with initial data on the top and left sides of $\mathcal V$. More precisely, we argue in the same way as in \cite{AIM18}, Lemma 6.1. Let us inspect Figure \ref{S6HexaParallel}. Given the D-N map, one can compute the values of $\widehat u(v)$ at the vertices $v$ on the line $x_1 = -1$.  They are zero in the halfplane $x_1 + \sqrt{3}x_2 < a_k$, since the same is true for relevant boundary data. Next we observe the values of $\widehat u(v)$ on the {\it adjacent line,} $x_1 = - 1/2$. We start from $\omega^4$ and go up. Then, by using the D-N map for $\omega^4$ and then the equation (\ref{S7BVP}) for the vertices above $\omega^4$, we see that $\widehat u(v) = 0$ holds for $v$ in the halfplane $x_1 + \sqrt{3}x_2 < a_k$ when $x_1 = -1/2$. Repeating this procedure, we conclude that $\widehat u(v) = 0$ holds in the lower halfplane $x_1 + \sqrt{3}x_2 < a_k$. To compute the values of $\widehat u(v)$ in the upper halfplane, $x_1 + \sqrt{3}x_2 \geq a_k$, we observe the top
 boundary $(\partial\mathcal V)_T$. We first compute $\widehat u(v)$ for vertices $v$ adjacent to $(\partial\mathcal V)_T$. They are determined by the D-N map. Next we observe $\widehat u(v)$ on the neighboring line, the second adjacent to the boundary. This time, we start from the rightmost vertex (for which, by virtue of claim (3) from Lemma \ref{Lemmaspecialboundarydata}, the value $\widehat u(v)$ is determined by the D-N map) and go down to the left. With the aid of the equation, we can then determine $\widehat u(v)$ by using the values of $V_e(z)$ and $C_v$ in the halfplane $x_1 + \sqrt{3}x_k > a_k$, which are known by assumption. Finally, using the equation, the values of $\widehat u(v)$ on the line $x_1 + \sqrt{3}x_2 = a_k$ are computed from those in the halfplane $x_1 + \sqrt{3}x_2 > a_k$ and $V_e(z), C_v$ specified in the lemma.
\end{proof}

Now we pass to the reconstruction procedure.

\smallskip
\noindent
Let $\widehat u$ be the solution to the equation (\ref{S7BVP}) specified in Lemma \ref{LemmaspecialboundarydataHexa2}. If $a_k$ is large enough, all the perturbations lie below the line $A_k$, hence the support of $\widehat u$ is disjoint with them. Let us slide $A_k$ downwards, and observe the equation when we first touch the perturbation. Let $a, b, b', c \in \mathcal V$ and ${e}, {e}' \in \mathcal E$ be as in  Figure~\ref{S7hex_edge}.
\begin{figure}[h]
\includegraphics[width=5cm]{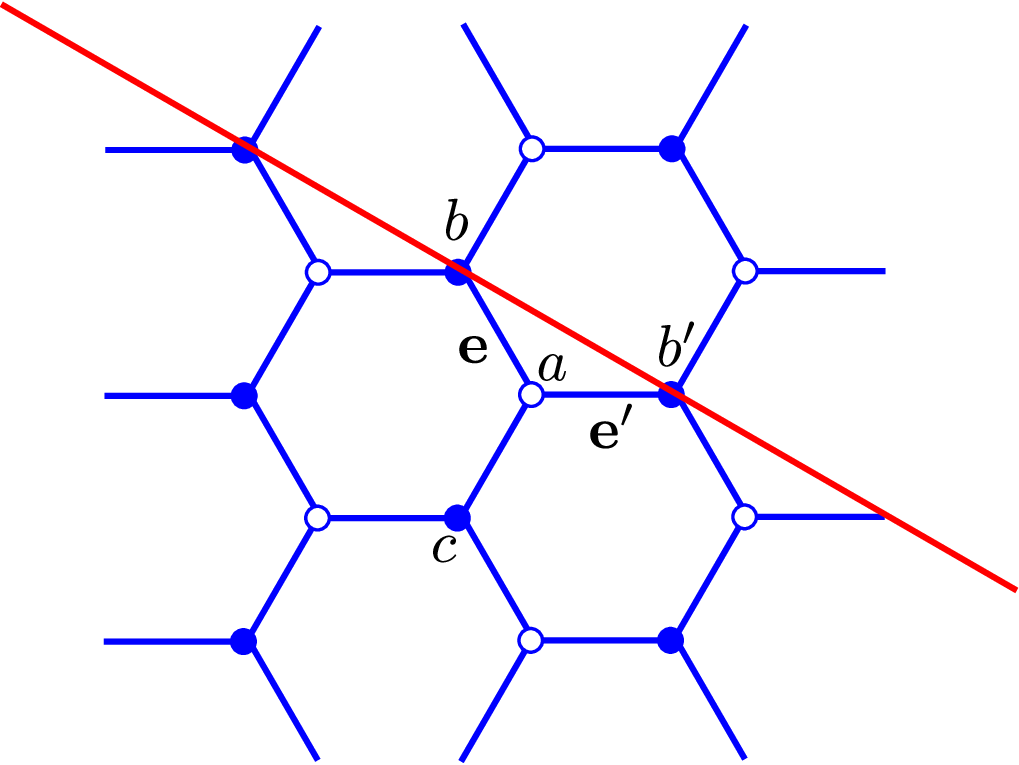}
\caption{{The values} $\widehat u(b)$ and $\widehat u(b')$ {considered}}
\label{S7hex_edge}
\end{figure}
Then, evaluating the equation (\ref{S7BVP}) at $v = a$, we obtain
\begin{equation}
\widehat u(b) = - \frac{\phi_{e0}(1,\lambda)}{\phi_{e'0}(1,\lambda)}\,\widehat u(b').
\nonumber
\end{equation}
Let ${e}_{k,1}, {e}'_{k,1}, {e}_{k,2}, {e}'_{k,2}, \cdots$ be the series of edges just below $A_k$ starting from the vertex $\alpha_k$, and put
\begin{equation}
f_{k,m}(\lambda) = - \frac{\phi_{e_{k,m}0}(1,\lambda)}{\phi_{e'_{k,m}0}(1,\lambda)}.
\label{s7equationforfkm}
\end{equation}
Then we see that the  solution $\widehat u$ in  Lemma \ref{LemmaspecialboundarydataHexa2} satisfies
\begin{equation}
\widehat u(\alpha_{k,\ell}) = f_{k,1}(\lambda)\cdots f_{k,\ell}(\lambda)\,;
\label{S3widehatuonAk}
\end{equation}
note that the right-hand side does not depend on $C_v$.
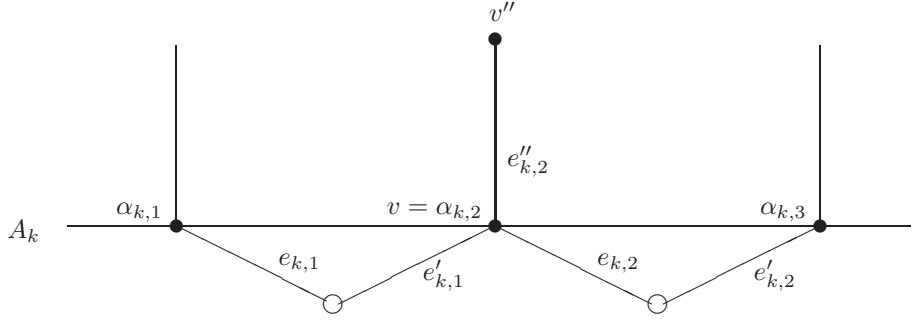
\begin{figure}[hbtp]
\begin{center}
\setlength{\unitlength}{0.8cm}
\begin{picture}(9,6)(0,-1)
\put(-2.8,0.8){$A_k$}
\put(-1.8,1){\line(1,0){14}}
\put(0,1){\circle*{0.2}}
\put(0,1){\line(0,1){3}}
\put(0,1){\line(2,-1){2.5}}
\put(-1, 1.2){$\alpha_{k,1}$}
\put(1.7,0.35){$e_{k,1}$}
\put(2.6,-0.3){\circle{0.3}}
\put(2.7,-0.3){\line(2,1){2.5}}
\put(4.1,0.1){$e'_{k,1}$}
\put(5.3,1){\circle*{0,2}}
\put(3.5,1.2){$v = \alpha_{k,2}$}
\put(5.5,2){$e''_{k,2}$}
\put(5.3,1){\line(0,1){3}}
\put(5.2,4.4){$v''$}
\put(5.3,4.1){\circle*{0,2}}
\put(5.4,0.95){\line(2,-1){2,5}}
\put(7,0.35){$e_{k,2}$}
\put(8.0,-0.3){\circle{0.3}}
\put(8.1,-0.3){\line(2,1){2.5}}
\put(9.6,0.1){$e'_{k,2}$}
\put(10.7,1){\circle*{0,2}}
\put(10.7,1){\line(0,1){3}}
\put(9.7,1.2){$\alpha_{k,3}$}
\end{picture}
\end{center}
\caption{From $e_{k,1}$ to $e'_{k,1}$, $e_{k,2}$}
\label{e1e2e3}
\end{figure}

\noindent
To proceed, observe the graph detail sketched in Figure \ref{e1e2e3}.
\begin{lemma}
\label{S3LemmaDetermineVe(z)Cv}
As in Lemma \ref{S3AboveAkLemma}, assume that we know $V_e(z)$ and $C_v$ for all edges $e$ and vertices $v$ above $A_k$. Then, assuming that we know $V_e(z)$ for $e = e_{k,1}$, {one} can determine $V_e(z)$ for $e = e'_{k,1}$ and $e = e_{k,2}$, and also $C_v$ for $v = \alpha_{k,2}$.
\end{lemma}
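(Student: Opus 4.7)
The plan is to mirror the two-step argument of Lemma \ref{LemmaSquareDetermineVeCv1}: first extract the Dirichlet spectrum of each unknown edge potential and invoke Borg's theorem to reconstruct it, then recover the vertex coupling constant by back-substitution into the vertex Schr\"odinger equation at $\alpha_{k,2}$.

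\textbf{Step 1 (reconstructing $V_{e'_{k,1}}$).} I would first note that by Lemma \ref{S3AboveAkLemma} the values $\widehat u(\alpha_{k,1},\lambda)$ and $\widehat u(\alpha_{k,2},\lambda)$ are computable as meromorphic functions of $\lambda$ from the D-N map and the data above $A_k$. Formula (\ref{S3widehatuonAk}) then makes both $f_{k,1}(\lambda)$ and $f_{k,2}(\lambda)$ known. Combined with the assumed knowledge of $V_{e_{k,1}}$, relation (\ref{s7equationforfkm}) for $m=1$ determines $\phi_{e'_{k,1}}(1,\lambda)$; its zeros are the Dirichlet eigenvalues of $V_{e'_{k,1}}$ on $[0,1]$, and Borg's theorem together with the symmetry $V_e(z) = V_e(1-z)$ uniquely recovers $V_{e'_{k,1}}$, in exact analogy with the first half of the proof of Lemma \ref{LemmaSquareDetermineVeCv1}.

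\textbf{Step 2 (reconstructing $V_{e_{k,2}}$ and $C_{\alpha_{k,2}}$).} Next I would write equation (\ref{S7BVP}) at $v = \alpha_{k,2}$, whose three neighbors in the hexagonal lattice are $v''$ (above $A_k$, joined by $e''_{k,2}$) and two vertices lying strictly below $A_k$, joined respectively by $e'_{k,1}$ and $e_{k,2}$. The $\widehat u$-values at the two below-line neighbors vanish by Lemma \ref{LemmaspecialboundarydataHexa2}(2), so the equation reduces to
\begin{equation}
\frac{\phi'_{e_{k,2}}(1,\lambda)}{\phi_{e_{k,2}}(1,\lambda)} + C_{\alpha_{k,2}} = \frac{\widehat u(v'',\lambda)}{\phi_{e''_{k,2}}(1,\lambda)\,\widehat u(\alpha_{k,2},\lambda)} - \frac{\phi'_{e''_{k,2}}(1,\lambda)}{\phi_{e''_{k,2}}(1,\lambda)} - \frac{\phi'_{e'_{k,1}}(1,\lambda)}{\phi_{e'_{k,1}}(1,\lambda)},
\nonumber
\end{equation}
whose right-hand side is a fully known meromorphic function: $V_{e''_{k,2}}$ belongs to an edge above $A_k$, $V_{e'_{k,1}}$ has just been reconstructed, and $\widehat u(v'',\lambda),\widehat u(\alpha_{k,2},\lambda)$ are produced by Lemma \ref{S3AboveAkLemma}. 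As in the second half of the proof of Lemma \ref{LemmaSquareDetermineVeCv1}, the constant $C_{\alpha_{k,2}}$ contributes no poles, and the Wronskian identity forbids $\phi'_{e_{k,2}}(1,\lambda)$ from vanishing at zeros of $\phi_{e_{k,2}}(1,\lambda)$; hence the poles of the left-hand side are precisely the Dirichlet eigenvalues of $V_{e_{k,2}}$, Borg's theorem delivers $V_{e_{k,2}}$, and evaluating the relation at any regular $\lambda$ then yields $C_{\alpha_{k,2}}$.

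The one point that requires a bit of care is the support-vanishing argument at $\alpha_{k,2}$: I would need to verify that the two bottom neighbors of $\alpha_{k,2}$ lie in the open halfplane $x_1 + \sqrt{3}x_2 < a_k$ rather than on $A_k$ itself, so that Lemma \ref{LemmaspecialboundarydataHexa2}(2) actually kills those two terms in the vertex equation. A direct coordinate check against Figure \ref{e1e2e3} settles this, and after that no analytic ingredient beyond the square-lattice argument is required.
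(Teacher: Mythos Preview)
Your proposal is correct and follows essentially the same route as the paper: recover $V_{e'_{k,1}}$ from $f_{k,1}(\lambda)$ via Borg's theorem, then write the vertex equation at $\alpha_{k,2}$ (using that $\widehat u$ vanishes at the two below-line neighbors) to isolate $\phi'_{e_{k,2}}(1,\lambda)/\phi_{e_{k,2}}(1,\lambda)+C_{\alpha_{k,2}}$, read off the Dirichlet spectrum of $V_{e_{k,2}}$ from the poles, and back-substitute for $C_{\alpha_{k,2}}$. The only cosmetic difference is that the paper places $C_v$ on the right-hand side of the displayed identity rather than the left, and does not spell out the geometric check you flag at the end.
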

\begin{proof}
In view of (\ref{S3widehatuonAk}), since we know $\widehat u(\alpha_{k,\ell})$ for all $\ell$, we also know the corresponding $f_{k,\ell}(\lambda)$. Furthermore, since we know $\phi_{e_{k,1}}(1,\lambda)$, we get from (\ref{s7equationforfkm}) $\phi_{e'_{k,1}0}(1,\lambda)$ for all values of the energy $\lambda$. The zeros of $\phi_{e'_{k,1}0}(1,\lambda)$ are the Dirichlet eigenvalues of $- (d/dz)^2 + V_e(z)$, $e = e'_{k,1}$. By  Borg's theorem, these eigenvalues determine the potential $V_{e}(z)$ at the edge $e = e'_{k,1}$.

Making the equation (\ref{S7BVP}) explicit at the vertex $v = \alpha_{k,2}$ and noting that $\widehat u = 0$ holds below $A_k$, we have
\begin{equation}
\frac{\phi'_{e_{k,2}0}(1,\lambda)}{\phi_{e_{k,2}0}(1,\lambda)}  = \frac{1}{\phi_{e''_{k,2}0}(1,\lambda)}
\frac{\widehat u(v'')}{\widehat u(v)}  - \frac{\phi'_{e''_{k,2}0}(1,\lambda)}{\phi_{e''_{k,2}0}(1,\lambda)} - \frac{\phi'_{e'_{k,1}0}(1,\lambda)}{\phi_{e'_{k,1}0}(1,\lambda)} - C_v,
\label{S3DetermineVe(z)andCv}
\end{equation}
where $v = \alpha_{k,2} =  e''_{k,2}(0)$, $v'' = e''_{k,2}(1)$. The both sides of (\ref{S3DetermineVe(z)andCv}) are meromorphic function in the complex plane, and we note that $\widehat u$ is also meromorphic with respect to $\lambda$ by construction. The singularities of the left-hand side are thus determined by those of the right-hand side. Furthermore, the singularities of the right-hand side can be computed without using $C_v$. Therefore, the zeros of $\phi_{e_{k,2}0}(1,\lambda)$ are  determined by $V_e(z)$ for edges $e$ in the halfplane $x_1 + \sqrt{3}x_2 \geq a_k$ and $e = e_{k,1},  e'_{k,1}$ as well as $C_w$ for $w$ in the open halfplane $x_1 + \sqrt{3}x_2 > a_k$. As before, Borg's theorem determines $V_{e_{k,2}}(z)$, which in turn {determines} $\phi_{e_{k,2}0}(1,\lambda)$ and $\phi'_{e_{k,2}0}(1,\lambda)$. The value of $C_v$ is then {computed} from the formula (\ref{S3DetermineVe(z)andCv}) for regular points $\lambda$.
\end{proof}

Lemma \ref{S3LemmaDetermineVe(z)Cv} implies the following {result}.

\begin{lemma}
\label{S5howtodetermineVez}
Let $\widehat u$ be as in Lemma \ref{LemmaspecialboundarydataHexa2}, and assume that we know $V_e(z)$ for all {the} edges $e$ above $A_k$, and $C_v$ for all {the} $v$ above $A_k$ but not on $A_k$. Then, {one} can determine all {the} $V_e(z)$ for $e$ between $A_k$ and $A'_k$, and $C_v$ for all {the} $v$ on $A_k$.
\end{lemma}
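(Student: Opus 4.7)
The strategy is to walk along the line $A_k$ from left to right, peeling off the unknown data one vertex at a time via iterated use of Lemma \ref{S3LemmaDetermineVe(z)Cv}.

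First, I would use Lemma \ref{S3AboveAkLemma} to compute the values $\widehat u(\alpha_{k,\ell})$ for $\ell=0,1,2,\dots$ as meromorphic functions of $\lambda$. The hypothesis furnishes $V_e(z)$ for every edge above $A_k$ and $C_v$ for every vertex strictly above $A_k$, which together with the D-N map is exactly the data that Lemma \ref{S3AboveAkLemma} demands. By construction (cf.\ formula (\ref{S3widehatuonAk})), these values depend only on the edge data in the upper halfplane and not on any of the unknown coupling constants on $A_k$ itself.

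Second, I would note that the leftmost edge $e_{k,1}$ of the zig-zag just below $A_k$ emanates from the boundary vertex $\alpha_k=\alpha_{k,0}\in(\partial\mathcal V)_T$, hence is adjacent to $\partial\mathcal V$, so its potential $V_{e_{k,1}}(z)$ is known a priori by the standing hypothesis of Theorem \ref{TheoremIBVP}. With this base case in hand, I would iterate Lemma \ref{S3LemmaDetermineVe(z)Cv} for $\ell=1,2,\dots$: knowing $V_{e_{k,\ell}}$ together with the computed values $\widehat u(\alpha_{k,\ell})$ and $\widehat u(\alpha_{k,\ell+1})$, the lemma delivers $V_{e'_{k,\ell}}$, $V_{e_{k,\ell+1}}$ and $C_{\alpha_{k,\ell+1}}$; feeding $V_{e_{k,\ell+1}}$ into the next iteration continues the reconstruction all the way to the right end of $A_k$, where the rightmost Dirichlet value is supplied by part~(3) of Lemma \ref{LemmaspecialboundarydataHexa2}. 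This exhausts every edge between $A_k$ and $A'_k$ and every coupling constant $C_v$ at the vertices on $A_k$.

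The main obstacle is bookkeeping rather than conceptual: one has to verify that at each application of Lemma \ref{S3LemmaDetermineVe(z)Cv} every required input is actually available from prior steps, and to handle the two endpoints carefully, namely the initialization by the boundary-adjacent edge $e_{k,1}$ at the left end and the termination using Lemma \ref{LemmaspecialboundarydataHexa2}(3) at the right end. Once these endpoints are dealt with, the intermediate iteration is a purely algorithmic sweep along $A_k$ that completes the proof.
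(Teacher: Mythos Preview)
Your proposal is correct and follows the same route as the paper, which simply records that Lemma~\ref{S3LemmaDetermineVe(z)Cv} implies Lemma~\ref{S5howtodetermineVez} without further detail. Your write-up is in fact more explicit than the paper's: you spell out the use of Lemma~\ref{S3AboveAkLemma} to compute $\widehat u$ on $A_k$, the initialization via the boundary-adjacent edge (using the standing hypothesis of Theorem~\ref{TheoremIBVP}), and the iteration of Lemma~\ref{S3LemmaDetermineVe(z)Cv} along the zig-zag, all of which the paper leaves implicit.
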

\begin{figure}[hbtp]
\centering
\includegraphics[width=5cm]{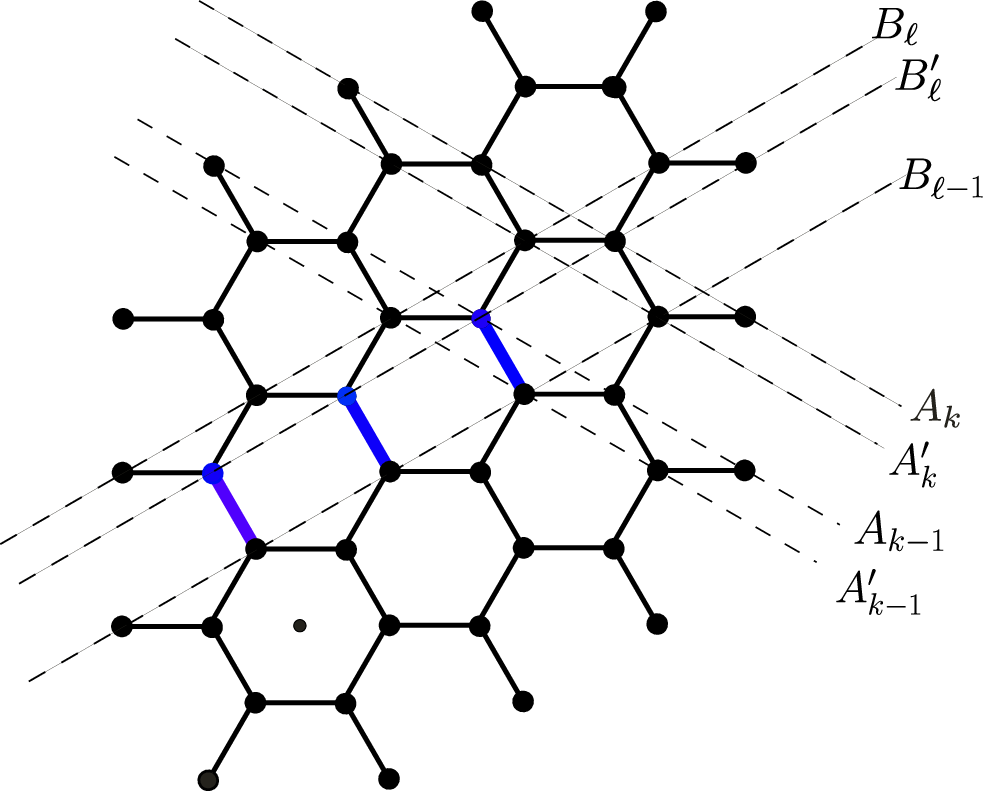}
\caption{From $B_l$ to $B_{l-1}$.}
\label{AkBl}
\end{figure}

Let  $A_k,B_l$ be as in Figure \ref{AkBl}.  By Lemma \ref{S5howtodetermineVez}, we can compute (from {the knowledge of} the D-N map) the edge {potentials} between $A_k,A_k'$ and between $B_l,B_l'$, and also compute $C_v$ for {the} vertices on $A_k,B_l$. We call this \emph{the initial procedure}.

Next we {make a step} inside {$\mathcal{D}_N$}. For the notational reason, we determine the edge potentials below $B'_{\ell}$. To be able to {repeat the described} procedure {after passing to the neighboring ``lower''} line $B_{l-1}$, it suffices to know the edge potential for the edges below $B'_{\ell}$ and $A'_k$ with {the} endpoints on $B'_{\ell}$, {as well as} $C_v$ for such endpoints on $B'_{\ell}$ in Figure \ref{AkBl}.

\smallskip

The following ``local'' version of Lemma \ref{S5howtodetermineVez} holds.
\begin{lemma}\label{Ak-1_local}
In Figure \ref{AkBl_local}, if one knows the edge potentials for the edges $ab$ and $ac$,  one can compute the edge potential of the edge $ac'$ and the vertex potential $C_v$ for $v=a$.
\end{lemma}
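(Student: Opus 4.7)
The plan is to replicate the singularity–analysis argument of Lemma \ref{S3LemmaDetermineVe(z)Cv} applied locally at the single vertex $v=a$. In the hexagonal lattice $a$ has degree three, with neighbors $b$, $c$, $c'$. First I would invoke the special solution $\widehat u$ of the vertex Schrödinger equation provided by Lemma \ref{LemmaspecialboundarydataHexa2} (or the corresponding $B_\ell$–analog, depending on which sliding line is active), so that $\widehat u$ vanishes below the current line. By Lemma \ref{S3AboveAkLemma} together with the inductive hypothesis — namely, the edge and vertex potentials above the current line are already known, and on the two edges $ab$, $ac$ we are given $V_{ab}$ and $V_{ac}$ — the values $\widehat u(a)$, $\widehat u(b)$, $\widehat u(c)$ are computable as meromorphic functions of $\lambda$ from the D-N map, while $\widehat u(c')=0$ because $c'$ lies below the sliding line.

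Next, I would make the equation $(-\widehat\Delta_{\mathcal V,\lambda}+\widehat Q_{\mathcal V,\lambda})\widehat u=0$ explicit at $v=a$. Using $\widehat u(c')=0$ and clearing the common factor $d_a$, it rearranges to
\begin{equation*}
\frac{\phi'_{e_{ac'},0}(1,\lambda)}{\phi_{e_{ac'},0}(1,\lambda)} + C_a
= \frac{1}{\widehat u(a)}\left( \frac{\widehat u(b)}{\phi_{e_{ab},0}(1,\lambda)} + \frac{\widehat u(c)}{\phi_{e_{ac},0}(1,\lambda)} \right)
- \frac{\phi'_{e_{ab},0}(1,\lambda)}{\phi_{e_{ab},0}(1,\lambda)}
- \frac{\phi'_{e_{ac},0}(1,\lambda)}{\phi_{e_{ac},0}(1,\lambda)},
\end{equation*}
in which every term on the right is a known meromorphic function of $\lambda$.

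Repeating the singularity argument of Lemma \ref{S3LemmaDetermineVe(z)Cv}, the poles of the left-hand side coincide exactly with the zeros of $\phi_{e_{ac'},0}(1,\lambda)$: the simultaneous vanishing of $\phi_{e,0}(1,\lambda)$ and $\phi'_{e,0}(1,\lambda)$ is excluded by the ODE, and the additive constant $C_a$ cannot shift pole locations. These zeros are precisely the Dirichlet eigenvalues of $-d^2/dz^2+V_{e_{ac'}}(z)$ on $[0,1]$, so by Borg's theorem (applicable thanks to the symmetry $V_e(z)=V_e(1-z)$) the potential $V_{e_{ac'}}$ is uniquely reconstructed. With $V_{e_{ac'}}$ known, both $\phi_{e_{ac'},0}(1,\lambda)$ and $\phi'_{e_{ac'},0}(1,\lambda)$ become known functions, and $C_a$ is then read off by evaluating the displayed identity at any regular $\lambda$.

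The only delicate point is to ensure that $\widehat u(c')=0$ and that $\widehat u(a),\widehat u(b),\widehat u(c)$ are genuinely computable from the admissible data, without circularly requiring $V_{e_{ac'}}$ or $C_a$. This is precisely the content of the Cauchy-propagation argument already carried out in Lemma \ref{S3AboveAkLemma}, applied to the local neighborhood shown in Figure \ref{AkBl_local}; verifying that the geometry of $a,b,c,c'$ fits that propagation scheme is the main thing one must check when deploying this lemma in the inductive step that follows.
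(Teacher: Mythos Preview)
Your core strategy---write the vertex equation at $a$, isolate $\phi'_{ac'}(1,\lambda)/\phi_{ac'}(1,\lambda)+C_a$, read off the Dirichlet spectrum from the poles, apply Borg, then recover $C_a$---is exactly what the paper does. Two points, however, deserve correction or sharpening.

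First, the geometry: in the configuration of Figure~\ref{AkBl_local} the paper uses the special solution $\widehat u = \widehat u_{A_{k-1}}$, and for that solution \emph{both} $c$ and $c'$ lie below $A_{k-1}$, so $\widehat u(c)=\widehat u(c')=0$. Your displayed identity therefore simplifies: the term $\widehat u(c)/\phi_{e_{ac},0}(1,\lambda)$ drops out, and the right-hand side involves only $\widehat u(b)/\widehat u(a)$. This is harmless for the logic (zero is a known value), but it means the edge potential $V_{ac}$ enters only through $\phi'_{ac}/\phi_{ac}$, not through any value of $\widehat u$.

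Second, and more substantively, your appeal to Lemma~\ref{S3AboveAkLemma} to compute $\widehat u(a)$ and $\widehat u(b)$ is not quite right: that lemma only delivers the values of $\widehat u$ \emph{on} the line $A_{k-1}$, whereas $a$ and $b$ sit in the strip between $B_\ell$ and $B_\ell'$, off $A_{k-1}$. The paper fills this gap with an explicit four-step chain: evaluate the equation at $b_2$ to get $\widehat u(b_1)$, at $b_4$ to get $\widehat u(b_5)$, at $b_5$ to get $\widehat u(b)$, and finally at $b$ to get $\widehat u(a)$. Each step uses edge potentials on edges lying between $B_\ell$ and $B_\ell'$ (and the vertex potential $C_b$), all of which were determined in the \emph{initial procedure} for the $B$-lines. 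You correctly flag this as ``the only delicate point'', but the content of the lemma is precisely that this local propagation can be carried out using only $B$-strip data already in hand---this chain is what you should spell out rather than defer to Lemma~\ref{S3AboveAkLemma}.
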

\begin{figure}[hbtp]
\centering
\includegraphics[width=6.5cm]{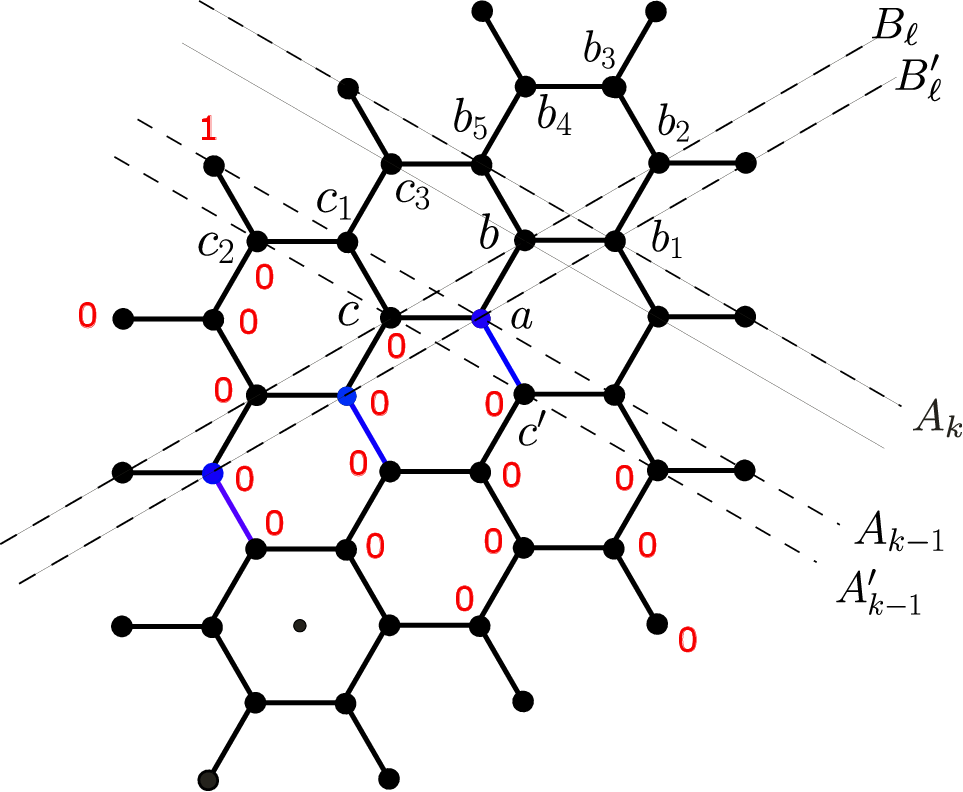}
\caption{Solving the edge potential of $ac'$ and $C_v$ at $v=a$.}
\label{AkBl_local}
\end{figure}
\begin{proof}
{As in} Lemma 3.5, {consider} the equation for $A_{k-1}$, as shown in Figure \ref{AkBl_local}; {for the sake of clarity and brevity, we denote it} by $\widehat{u}:=\widehat{u}_{A_{k-1}}$. {Using the edge identification} $e=ab$, {we also use} the {symbol} $\phi_{ab}(1,\lambda)$ {for} $\phi_{e0}(1,\lambda)$.

{The} procedure to {``to} compute $\phi$'' means {to express it using the} known quantities {in the successive steps:}

\medskip
\noindent
(i) Evaluating the equation (\ref{S7BVP}) at $b_2$, we can compute $\widehat{u}_{A_{k-1}}(b_1)$.
\\
(ii) Evaluating the equation at $b_4$, we can compute $\widehat{u}_{A_{k-1}}(b_5)$. \\
(iii) Evaluating the equation at $b_5$, we can compute $\widehat{u}_{A_{k-1}}(b)$. \\
(iv) The potential {at the edges} $bb_1,ba$ and the vertex potential $C_v$ at $b$ are {found} during the initial procedure for the {lines $B$}.
Hence {solving} the equation at $b$, we can compute $\widehat{u}_{A_{k-1}}(a)$. \\
(v) The potentials {at the edges} $ab,ac$ are {found} during the initial procedure for the {lines $B$}. Since $\widehat{u}_{A_{k-1}}(c)=\widehat{u}_{A_{k-1}}(c')=0$ and $\widehat{u}_{A_{k-1}}$ at $a,b$ are known, {one} can determine the {potential at} $ac'$ and $C_v$ at $v=a$ by the same argument as {in} Lemma~3.7.

Let us explain the step (v) in detail. We consider the equation for $\widehat{u}$ at vertex $a$:
$$(-\widehat{\Delta}_{\mathcal{V},\lambda}+\widehat{Q}_{a,\lambda})\widehat{u}(a)=0,$$
where
\begin{equation*}
(\widehat{\Delta}_{\mathcal{V},\lambda} u)(a)=\frac{1}{3} \sum_{w\sim a} \frac{1}{\phi_{aw}(1,\lambda)} \widehat{u}(w)=\frac{1}{3} \frac{1}{\phi_{ab}(1,\lambda)} \widehat{u}(b),
\end{equation*}
since $\widehat{u}(c)=\widehat{u}(c')=0$. By (\ref{QVlambdaformula}),
\begin{eqnarray*}
\widehat{Q}_{a,\lambda}&=&\frac{1}{3} \sum_{w\sim a} \frac{\phi_{aw}'(1,\lambda)}{\phi_{aw}(1,\lambda)}+\frac{C_a}{3} \\
&=& \frac{1}{3}\left(\frac{\phi_{ab}'(1,\lambda)}{\phi_{ab}(1,\lambda)}+\frac{\phi_{ac}'(1,\lambda)}{\phi_{ac}(1,\lambda)}+\frac{\phi_{ac'}'(1,\lambda)}{\phi_{ac'}(1,\lambda)}+C_a\right).
\end{eqnarray*}
This equation yields
\begin{equation} \label{eq-ac'}
\frac{\phi_{ac'}'(1,\lambda)}{\phi_{ac'}(1,\lambda)}= \frac{1}{\phi_{ab}(1,\lambda)} \frac{\widehat{u}(b)}{\widehat{u}(a)}-\frac{\phi_{ab}'(1,\lambda)}{\phi_{ab}(1,\lambda)}-\frac{\phi_{ac}'(1,\lambda)}{\phi_{ac}(1,\lambda)}-C_a.
\end{equation}
By the above {construction}, $\widehat u(a)$ is a meromorhic function of $\lambda$. All {the} quantities on the right-hand side of \eqref{eq-ac'} except $C_a$ are known. This means the singularities of the right-hand side of \eqref{eq-ac'} are known. Hence we can determine the singularities of the left-hand side of
\eqref{eq-ac'} without knowing $C_a$. The singularities of the left-hand side are the Dirichlet eigenvalues for the edge $ac'$, which {determine} the edge potential of $ac'$ by Borg's theorem. This shows $\phi_{ac'}(z,\lambda),\ z\in [0,1]$ can be computed. Inserting $\phi_{ac'}(1,\lambda),\phi'_{ac'}(1,\lambda)$ back into \eqref{eq-ac'} gives $C_a$.
\end{proof}

Note that in the above proof, we use only the knowledge of edges between $B_{\ell}$ and $B_{\ell'}$. We emphasize that the knowledge of edges below both of $A_k$ and $B_{\ell}$, which is still unknown, is not used. Then we repeat Lemma \ref{Ak-1_local} by taking the function $\widehat{u}$ for $A_{k-2},A_{k-3},\cdots$, and thus we recover the edge potential for all the  edges just below $B'_{\ell}$ and left to $A_{k-1}$,
and $C_v$ for all  vertices between $B_{\ell}$ and $B_{\ell}'$ in Figure \ref{AkBl}. This shows that we can push the lines $B_l$ {as ``low''} as needed.

By a symmetric {reasoning,} choosing the function $\widehat{u}$ in {the} different direction (i.e., choosing $\widehat{u}_{B_j}$ for {the} proper $j$), one can show that the lines $A_k$ can also be pushed {``down'' as low} as needed. But {in fact}, this is not {needed}, since pushing the lines $B_l$ {``down''} alone can already recover the whole perturbed region.

We {arrive thus at the following conclusions:}
\begin{theorem}
\label{TheoremTBVPspecialHexa}
Let $\Omega$ be a hexagonal paralellogram as in Figure \ref{S6HexaParallel}. Then from the D-N map of $- \widehat{\Delta}_{\mathcal,\lambda} + \widehat Q_{\mathcal V,\lambda}$, one can determine all {the} $V_e(z)$ and $C_v$, provided we know $V_e(z)$ for all {the} edge $e$ adjacent to the boundary of $\mathcal V$.
\end{theorem}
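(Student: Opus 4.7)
The plan is to combine the initial procedure of Lemma~\ref{S5howtodetermineVez} with the local propagation step of Lemma~\ref{Ak-1_local}, sweeping the hexagonal parallelogram $\mathcal D_N$ downward by the family of transverse lines $B_\ell$ while using the special solutions $\widehat u_{A_{k-j}}$ supplied by Lemma~\ref{LemmaspecialboundarydataHexa2} as the probes at each step. The base of the induction is the hypothesis that $V_e(z)$ is known on every edge adjacent to $\partial\mathcal V$, which supplies a thin collar of known potentials around the boundary of $\mathcal D_N$.

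First I would carry out the initial procedure for both families of lines. Successively for $k = N, N-1, \ldots, 0$, I invoke Lemma~\ref{S5howtodetermineVez} to recover $V_e(z)$ between $A_k$ and $A_k'$ and $C_v$ for $v \in A_k$; a symmetric argument, exchanging the roles of the top and left portions of $\partial\mathcal V$ (in the spirit of the square-lattice treatment in Lemma~\ref{LemmaSquareDetermineVeCv3}), does the same for the lines $B_\ell$ nearest the boundary. At every stage the values of the probing solution throughout the closed upper half-plane $x_1+\sqrt3 x_2\ge a_k$ are computable by Lemma~\ref{S3AboveAkLemma}, so the D-N map together with the inductively known potentials is enough input for each reconstruction.

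To push beyond this collar I would fix $\ell$ and assume inductively that all $V_e(z)$ and $C_v$ in the region above $B_\ell$ have already been determined. Applying Lemma~\ref{Ak-1_local} at each vertex $a$ on the next line, with the probing function $\widehat u_{A_{k-j}}$ chosen so that the two edges labeled ``upper'' at $a$ lie above $B_\ell$, identifies the remaining edge potential at $a$ and the coupling constant $C_a$ from the meromorphic identity \eqref{eq-ac'}. Varying the index $j$ covers every vertex in the strip between $B_\ell$ and $B_{\ell-1}$, after which the same step can be repeated at the next line. Inducting on $\ell$ exhausts $\mathcal D_N$; as explained in the discussion preceding the theorem, it is not necessary to push the lines $A_k$ as well, since pushing the $B_\ell$ downward alone reaches every vertex of the parallelogram.

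The main obstacle I expect is the combinatorial bookkeeping in this last step: at each invocation of Lemma~\ref{Ak-1_local} one must verify that the two edges designated as ``upper'' at the current frontier vertex have indeed been reconstructed in an earlier iteration, with due attention to the bipartition $\mathcal V_0=\mathcal V_{01}\cup\mathcal V_{02}$ and to the alternation of vertex types along each diagonal. This is a purely geometric check on the hexagonal lattice and amounts to choosing the probing line $A_{k-j}$ correctly at each vertex of $B_{\ell-1}$. Once it is in place, the theorem follows by induction, and Theorems~\ref{TheoremIBVP} and~\ref{TheoremSquareISP} in the hexagonal case then follow from Theorem~\ref{TheoremTBVPspecialHexa} exactly as in the square case treated at the end of Section~\ref{SectionSquareLattice}.
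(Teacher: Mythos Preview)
Your proposal is correct and follows essentially the same route as the paper: first the initial procedure via Lemma~\ref{S5howtodetermineVez} to determine the edge potentials in the strips $A_k$--$A_k'$ and $B_\ell$--$B_\ell'$ together with the coupling constants on $A_k$ and $B_\ell$, and then the inductive push using Lemma~\ref{Ak-1_local} with the probes $\widehat u_{A_{k-j}}$ for varying $j$ to advance the $B$-frontier one layer at a time. Your observation that pushing only the $B_\ell$ lines suffices, and your identification of the geometric bookkeeping (matching the ``upper'' edges at each frontier vertex to previously reconstructed ones) as the only point needing care, both mirror the paper's discussion exactly.
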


Theorems \ref{TheoremIBVP} and \ref{TheoremSquareISP} for the hexagonal lattice then follow from Theorem \ref{TheoremTBVPspecialHexa}.


\begin{thebibliography}{99}




 \bibitem{A1} K. Ando, \emph{Inverse scattering theory for discrete Schr\"odinger operators on the hexagonal lattice}. Ann. Henri Poincare \textbf{14} (2013), 347--383.

\bibitem{AIM16} K. Ando, H. Isozaki, H. Morioka, \emph{Spectral properties for Schr\"odinger operators on perturbed lattices}. Ann. Henri Poincare \textbf{17} (2016), 2103--2171.

\bibitem{AIM18}
K. Ando, H. Isozaki and H. Morioka, \emph{ Inverse scattering for Schr{\"o}dinger operators on perturbed periodic lattices}, Ann. Henri Poincar{\'e} \textbf{19} (2018), 3397-3455.

\bibitem{AvdBelMat11}
S. Avdonin, B. P. Belinskiy, and J. V. Matthews, \textit{Dynamical inverse problem on a metric tree}, Inverse Porblems \textbf{27} (2011),  075011.

\bibitem{Bel87}
M. Belishev, \textit{An approach to multidimensional inverse poblems for the wave equation}. (Russian) Dokl. Akad. Nauk SSSR. \textbf{297} (1987), 524-527.

\bibitem{Bel04}
M. I. Belishev, \textit{Boundary spectral inverse  problem on a class of graphs (trees) by the BC method}, Inverse Problems \textbf{20} (2004), 647-672.

\bibitem{BelKur92}
M. Belishev and Y. Kurylev, \textit{To the reconstruction of a Riemannian manifold via its spectral data (BC-method)}, Comm. PDE. \textbf{17} (1992), 767-804.

\bibitem{Below85}
J. von Below, \textit{A characteristsic equation associated to an eigenvalue problem on $c^2$-networks}, Linear Algebra Appl. \textbf{71} (1985), 309-325.

\bibitem{BerkolaikoKuchment2013}
G. Berkolaiko and P. Kuchment, \emph{Introduction to Quantum Graphs}, Amer. Math. Soc., Providence, R.I. (2013).

\bibitem{BILL2021}
E. Bl{\aa}sten, H. Isozaki, M. Lassas and J. Lu, \textit{The Gel'fand's inverse problem for the graph Laplacian}, J. Spectral Theory \textbf{13} (2023), 1-45.

\bibitem{BILL2}
E. Bl{\aa}sten, H. Isozaki, M. Lassas and J. Lu, \textit{Inverse problems for discrete heat equations and random walks for a class of discrete graphs}, SIAM Discrete Math. \textbf{37} (2023), 831-863.

\bibitem{BEILL}
E. Bl{\aa}sten, P. Exner, H. Isozaki, M. Lassas and J. Lu, \emph{Inverse problems for locally perturbed lattices - Discrete Hamiltonian and quantum graph}, to appear in Ann. Henri Lebesgue.

\bibitem{BondaShieh17}
N. Bondarenko and C. T. Shieh, \textit{Partial inverse problem for Sturm-Liouville operators on trees}, Proceedings of the Royal Society of Edingburgh \textbf{147A} (2017), 917-933.


\bibitem{Bonda20}
N. Bondarenko, \textit{Spectral data characterization for the Sturm-Liouville operator on the star-shaped graph}, {Anal. Math. Phys. \textbf{10} (2020), 83}

\bibitem{BroWei09}
B. M. Brown and R. Weikard, \textit{A Borg-Levinson theorem for trees}, Proc. Royal Soc. Lond. Ser. A Math. Phys. Eng. Sci. \textbf{461} 2062  (2005), 3231-3243.

\bibitem{BPG08}
J. Br{\"u}ning, V. {Geyler} and K. Pankrashkin, \textit{Spectra of self-adjoint extensions and applications to solvable Schr{\"o}dinger operators}, Rev. Math. Phys. \textbf{20} (2008), 1-70.


\bibitem{Cattaneo97}
C. Cattaneo, \textit{ The spectrum of the continuous Laplacian on a graph}, \emph{Monatsh. Math.} \textbf{124} (1997), 215--235.

\bibitem{ChExTu10}
T. Cheon, P. Exner and O. Turek, \textit{Approximation of a general singular vertex coupling in quantum graphs}, Ann. Phys. \textbf{325} (2010), 548-578.

\bibitem{Exner96}
P. Exner, \textit{Weakly coupled states on branching graphs}, Lett. Math. Phys. \textbf{38} (1996), 313--320.

\bibitem{Exner97}
P. Exner,
\textit{A duality between Schr{\"o}dinger operators on graphs and certain Jacobi matrices}, Ann. Inst. Henri Poincar{\'e} \textbf{66} (1997), 359--371.

\bibitem{EKMN17}
P. Exner, A. Kostenko, M. Malamud and H. Neidhardt, \textit{Spectral theory for infinite quantum graph}, Ann. Henri Poincar{\'e} \textbf{19} (2018), 3457-3510.

\bibitem{ExnerKovarik2015}
P. Exner and H. Kova\v{r}{\'i}k, \textit{Quantum Waveguides}, Springer, Cham Heidelberg New York Dordrecht London (2015).

\bibitem{ExnerPost09}
P. Exner and O. Post, \textit{Approximation of quantum graph vertex couplings by scaled Schr{\"o}dinger operators on thin branched manifolds}, J. Phys. A: Math. Theor. \textbf{42} (2009), 415305.

\bibitem{Es}
M.S. Eskina, \textit{The direct and the inverse scattering problem for a partial difference equation}, Soviet Math. Doklady, \textbf{7} (1966), 193-197.

\bibitem{GutSmil01}
B. Gutkin and U. Smilansky, \textit{Can one hear the shape of a graph?} J. Phys. A \textbf{34} (2001), 6061-6068.

\bibitem{FrYu01}
G. Freiling and V. Yurko, \textit{Inverse Sturm-Liouville Problems and their Applications}, Nova Science Publishers, Hauppauge (2001).

\bibitem{IK} H. Isozaki, E. Korotyaev, \emph{Inverse problems, trace formulae for discrete Schr\"odinger operators}. Ann. Henri Poincare \textbf{13} (2012), 751--788.


\bibitem{IsoMo15}
H. Isozaki and H. Morioka, \textit{Inverse scattering at a fixed energy for discrete Schr\"odinger operators on the square lattice}. Ann. l'Inst. Fourier \textbf{65} (2015), 1153--1200.

\bibitem{KoLo07}
E. Korotyaev and I. Lobanov, \textit{Schr{\"o}dinger operators on zigzag nanotubes}, Ann. Henri Poincar{\'e} \textbf {8} (2007), 1151-1076.

\bibitem{KoSa14}
E. Korotyaev and N. Saburova, \textit{Schr{\"o}dinger operators on periodic discrete graphs}, J. Math. Anal. Appl. \textbf{420} (2014), 576-611.



\bibitem{KoSa15b}
E. Korotyaev and N. Saburova, \textit{Estimates of bands for Laplacians on periodic equilateral metric graphs},  Proc. Amer. Math. Soc. \textbf{114} (2016), 1605-1617.


\bibitem{KoSa15}
E. Korotyaev and N. Saburova, \textit{Scattering on periodic metric graphs}, Rev. Math. Phys. \textbf{32} (2020), 2050024 (51 p.).




\bibitem{KN22}
A.~Kostenko, N.~Nicolussi, \emph{Laplacians on Infinite Graphs}, Mem.~EMS, Berlin 2022.

\bibitem{KostrykinSchrader1999}
V.~Kostrykin, R.~Schrader, \emph{Kirchhoff's rule for quantum wires}, J. Phys. A: Math. Gen. \textbf{32} (1999),  595-630.

\bibitem{Ku24}
P.~Kurasov, \emph{Spectral Geometry of Graphs}, Birkh\"auser, Berlin 2024.

\bibitem{MochiTroosh12}
K. Mochizuki and I.Yu. Trooshin:  \textit{On the scattering on a loop-shaped graph}, Progress of Math. \textbf{301} (2012), 227-245.

\bibitem{Nakamura14}
S. Nakamura, \textit{Modified wave operators for discrete Schr{\"o}dinger operators with long-range perturbations}, J. Math. Phys. \textbf{55} (2014), 112101.

\bibitem{Niikuni16}
H. Niikuni, \textit{Spectral band structure of periodic Schr{\"o}dinger operators with two potentials on the degenerate zigzag nanotube}, J. Appl. Math. Comput. (2016) 50:453-482.

\bibitem{Pank06}
K. Pankrashkin, \textit{Spactra of Schr{\"o}dinger operators on equilateral quantum graphs}, Lett. Math. Phys. \textbf{77} (2006), 139-154.

\bibitem{Pankrashkin13}
K. Pankrashkin: An example of unitary equivalence between self-adjoint extensions and their parameters, J. \emph{Funct. Anal.} \textbf{265} (2013), 2910--2936.

\bibitem{ParRich18}
D. Parra and S. Richard, \textit{Spectral and scattering theory for Schr{\"o}dinger operators on perturbed topological crystals},
Rev. Math. Phys. \textbf{30} (2018), Article No. 1850009, pp 1-39.

\bibitem{Pivo00}
V. Pivovarchik, \textit{Inverse problem for the Sturm-Liouville equation on a simple graph}, SIAM J. Math. Anal. \textbf{32} (2000), 801-819.

\bibitem{Post12}
O. Post, \textit{Spectal Analysis on Graph-like Spaces}, Lecture Notes in Mathematics \textbf{2039}, Springer, Heidelberg (2012).

\bibitem{PoTru}
J. P{\"o}schel and E. Trubowitz, \textit{Inverse Spectral Theory},
Academic Press, Boston, (1987).

\bibitem{Tadano19}
Y. Tadano, \textit{Long-range scattering for discrete Schr{\"o}dinger operators}, Ann. Henri Poincar{\'e} \textbf{20} (2019), 1439-1469.

\bibitem{VisComMirSor11}
F. Visco-Comandini, M. Mirrahimi, and M. Sorine, \textit{ Some inverse scattering problems on star-shaped graphs}, J. Math. Anal. Appl. \textbf{387}  (2011), 343-358.

\bibitem{YangXu18}
X. C. Xu and C. F. Yang, \textit{Determination of the self-adjoint matrix Schr{\"o}dinger operators without the bound state data},
Inverse Problems \textbf{34} (2018), 065002 (20pp).

\bibitem{Yurk05}
V. Yurko, \textit{Inverse spectral problems for Sturm-Liouville operators on graphs}, Inverse Problems \textbf{21} (2005), 1075-1086.

\bibitem{Yurko16(1)}
V. Yurko, \textit{Inverse spectral problems for differential operators  on spatial networks}, Russ. Math. Surveys \textbf{71}, No 3 (2016), 539-584.





 \end{thebibliography}
 \end{document}